\documentclass[11pt,a4paper,reqno,intlimits,sumlimits]{amsart}

\usepackage{amsmath}
\usepackage{amssymb}
\usepackage{chicago}
\usepackage[mathscr]{euscript}
\usepackage{enumerate}
\usepackage{xspace}
\usepackage{color}
\usepackage{epsfig,rotating}

\DeclareMathAlphabet{\mathpzc}{OT1}{pzc}{m}{it}

\begin{document}
%% ================================================================ %%
%% ================================================================ %%
\theoremstyle{plain}
\newtheorem{theorem}{Theorem}[section]
\newtheorem{lemma}[theorem]{Lemma}
\newtheorem{proposition}[theorem]{Proposition}
\newtheorem{corollary}[theorem]{Corollary}

\theoremstyle{definition}
\newtheorem{remark}[theorem]{Remark}
\newtheorem{example}[theorem]{Example}
\newtheorem{assumption}[theorem]{Assumption}
\newtheorem{statement}[theorem]{Statement}
\newtheorem*{assuEM}{Assumption ($\mathbb{EM}$)}
\newtheorem*{IREM}{Important Remark}

\newcommand{\Law}{\ensuremath{\mathop{\mathrm{Law}}}}
\newcommand{\loc}{{\mathrm{loc}}}
\newcommand{\Log}{\ensuremath{\mathop{\mathcal{L}\mathrm{og}}}}
\newcommand{\Meixner}{\ensuremath{\mathop{\mathrm{Meixner}}}}

\let\SETMINUS\setminus
\renewcommand{\setminus}{\backslash}

\def\stackrelboth#1#2#3{\mathrel{\mathop{#2}\limits^{#1}_{#3}}}

\renewcommand{\theequation}{\thesection.\arabic{equation}}
\numberwithin{equation}{section}

\newcommand{\prozess}[1][L]{{\ensuremath{#1=(#1_t)_{0\le t\le T}}}\xspace}
\newcommand{\prazess}[1][L]{{\ensuremath{#1=(#1_t)_{0\le t\le T^*}}}\xspace}
\newcommand{\scal}[2]{\ensuremath{\langle #1, #2 \rangle}}
%% ================================================================ %%
%% ================================================================ %%
\def\C{\ensuremath{\mathbb{C}}}
\def\F{\ensuremath{\mathcal{F}}}
\def\R{\ensuremath{\mathbb{R}}}
\def\bF{\mathbf{F}}
\def\V{\mathbb{V}}

\def\supL{\overline{L}}
\def\infL{\underline{L}}

\def\Rdmz{\R^d\setminus\{0\}}
\def\Rmz{\R\setminus\{0\}}
\def\Rnmz{\R^n\setminus\{0\}}
\def\Rp{\mathbb{R}_{\geqslant0}}

\def\la{\ensuremath{L^1(\R)}}
\def\labc{\ensuremath{L^1_{\text{bc}}(\R)}}
\def\lad{\ensuremath{L^1(\R^d)}}
\def\lap{\ensuremath{L^\infty(\R)}}
\def\lat{\ensuremath{L^2(\R)}}

\def\lev{L\'{e}vy\xspace}
\def\lk{L\'{e}vy--Khintchine\xspace}
\def\mg{martingale\xspace}
\def\num{num\'{e}raire\xspace}
\def\smmg{semimartingale\xspace}
\def\tih{time-inhomogeneous\xspace}
\def\wh{Wiener--Hopf\xspace}

\def\eqlaw{\ensuremath{\stackrel{\mathrrefersm{d}}{=}}}
\def\chartri{\ensuremath{(b,\sigma,\nu)}}
\def\dsdx{\ensuremath{(\ud s, \ud x)}}
\def\dtdx{\ensuremath{(\ud t, \ud x)}}

\def\intrr{\ensuremath{\int_{\R}}}

\def\ud{\ensuremath{\mathrm{d}}}
\def\dt{\ud t}
\def\ds{\ud s}
\def\dx{\ud x}
\def\dy{\ud y}
\def\dz{\ud z}
\def\du{\ud u}
\def\e{\mathrm{e}}
\def\ecc{\mathbf{e}_\mathpzc{i}}
\def\icc{\mathpzc{i}}

\def\EM{\ensuremath{(\mathbb{EM})}\xspace}
\def\ES{\ensuremath{(\mathbb{ES})}\xspace}
\def\AC{\ensuremath{(\mathbb{AC})}\xspace}

\def\ott{{0\leq t\leq T}}

\def\bg{\ensuremath{\bar g}}
\def\logs{\mathpzc s}
%% ================================================================ %%
%% ================================================================ %%

\title[Wiener--Hopf factorization and exotic options]
      {Analyticity of the Wiener--Hopf factors and valuation of exotic options in L\'evy models}

\author[E. Eberlein]{Ernst Eberlein}
\author[K. Glau]{Kathrin Glau}
\author[A. Papapantoleon]{Antonis Papapantoleon}

\address{Department of Mathematical Stochastics, University of Freiburg,
        Eckerstr. 1, 79104 Freiburg, Germany}
\email{eberlein@stochastik.uni-freiburg.de}

\address{Department of Mathematical Stochastics, University of Freiburg,
        Eckerstr. 1, 79104 Freiburg, Germany}
\email{glau@stochastik.uni-freiburg.de}

\address{Institute of Mathematics, TU Berlin, Stra\ss e des 17. Juni 136,
         10623 Berlin, Germany \& Quantitative Products Laboratory,
         Deutsche Bank AG, Alexanderstr. 5, 10178 Berlin, Germany}
\email{papapan@math.tu-berlin.de}

\keywords{\lev processes, \wh factorization, exotic options}
\subjclass[2000]{91B28, 60G51}
\thanks{K.~G. would like to thank the DFG for financial support through project
        \mbox{EB66/11-1}, and the Austrian Science Fund (FWF)
        for an invitation under grant P18022. A.~P. gratefully acknowledges the
        financial support from the Austrian Science Fund (FWF grant Y328,
        START Prize)}
\date{}\maketitle
\setcounter{tocdepth}{1}
\frenchspacing
\pagestyle{myheadings}

\begin{abstract}
This paper considers the valuation of exotic path-dependent options in \lev
models, in particular options on the supremum and the infimum of the asset
price process. Using the \wh factorization, we derive expressions for the
analytically extended characteristic function of the supremum and the infimum
of a \lev process. Combined with general results on Fourier methods for option
pricing, we provide formulas for the valuation of one-touch options, lookback
options and equity default swaps in \lev models.
\end{abstract}

\section{Introduction}

The ever-increasing sophistication of derivative products offered by financial
institutions, together with the failure of traditional Gaussian models to describe
the dynamics in the markets, has lead to a quest for more realistic and flexible
models. In fact one of the lessons from the current financial crisis is the
following: the Gaussian copula model is inappropriate to describe the
interdependence between the tails of asset returns because, among other pitfalls,
the tail dependence coefficient is always zero; hence, this model cannot capture
systemic risk.

In the search for appropriate alternatives, \emph{\lev processes}
are playing a leading role, either as models for financial assets themselves,
or as building blocks for models, e.g. in \lev-driven stochastic volatility
models or in affine models. The field of \lev processes has become popular
 in modern mathematical finance, and the interest from academics
and practitioners has led to inspiring and challenging questions.

\lev processes are attractive for applications in mathematical finance because
they can describe some of the observed phenomena in the markets in a rather
adequate way. This is due to the fact that their sample paths may have jumps and
the generated distributions can be heavy-tailed and skewed. Another important
improvement concerns the famous smile effect. See Eberlein and Keller
\citeyear{EberleinKeller95} and
\citeN{EberleinPrause02} for an extensive empirical justification of the
non-Gaussianity of asset returns and the appropriateness of (generalized
hyperbolic) \lev processes.  For an overview of the application of \lev
processes in finance the interested reader is referred to the text\-books of
\citeN{ContTankov03}, \citeN{Schoutens03} as well as the collection edited by
\shortciteN{KyprianouSchoutensWilmott05}. There are, of course, several textbooks
dealing with the theory of \lev processes; we mention \citeN{Bertoin96},
Sato \citeyear{Sato99}, \citeN{Applebaum04} and \citeN{Kyprianou06}, while the collection
by \shortciteN{Barndorff-NielsenMikoschResnick01} contains an overview of the
application of \lev processes in different areas of research, such as quantum
field theory and turbulence.

The application of \lev processes in financial modeling, in particular for
the pricing and hedging of derivatives, has led to new challenges of both
\emph{analytical} and \emph{numerical} nature. In \lev models simple closed
form valuation formulas are typically not available even for plain vanilla
European options, let alone for exotic path-dependent options. The numerical
methods which have been developed in the classical Gaussian framework lead
to completely new challenges in the context of \lev driven models. These
numerical methods can be
classified roughly in three areas: probabilistic numerical methods (Monte
Carlo methods), deterministic numerical methods (PIDE methods), and Fourier
transform methods; for an excellent survey of these methods, their
applicability and limitations, we refer to \shortciteN{HilberReichSchwabWinter09}.

This paper focuses on the application of Fourier transform methods for the
valuation of exotic path-dependent options, in particular options depending
on the supremum and the infimum of \lev processes. The bulk of the literature
on this latter topic focuses on the numerical aspects. Our focus is on the
analytical aspects. More specifically, we show first that the \wh factorization
of a \lev process possesses an analytic extension, and then we prove that the \wh
factorization (viewed as a Laplace transform in time) can be inverted. These
results allow us to derive expressions for the extended characteristic function
of the supremum and the infimum of a \lev process. This latter result, combined
with general results on option pricing by Fourier methods (cf.
\shortciteNP{EberleinGlauPapapantoleon08}), allows us to derive pricing formulas
for lookback options, one-touch options and equity default swaps in \lev models.

Let us briefly comment on some papers where the \wh factorization is used to
price exotic options in \lev models. Boyarchenko and Levendorski\v{\i}
\citeyear{BoyarchenkoLevendorskii02b} derive
valuation formulas for barrier and one-touch options for driving \lev processes
that belong to the class of so-called ``regular \lev processes of exponential type''
(RLPE); cf. also the book by Boyarchenko and Levendorski\v{\i}
\citeyear{BoyarchenkoLevendorskii02book}. The results of
these authors are based on the theory of pseudodifferential operators. The
numerics of this approach is pushed further in \citeANP{KudryavtsevLevendorskii09}
\citeyear{KudryavtsevLevendorskii06,KudryavtsevLevendorskii09}.
\shortciteN{AvramKyprianouPistorius04}, \shortciteN{AsmussenAvramPistorius04},
\citeN{KyprianouPistorius03}, Alili and Kyprianou \citeyear{AliliKyprianou05}, and
\shortciteN{LevendorskiiKudryavtsevZherder05} consider the
valuation of American and Russian options, either on a finite or an infinite
time horizon. Jeannin and Pistorius \citeyear{JeanninPistorius09} develop methods for
the computation of prices and Greeks for various \lev models. Central in
their argumentation is the approximation of different \lev models by the
class of  ``generalized hyper-exponential L\'evy models'', which
have a tractable \wh factorization. The same approach is also applied
in \shortciteN{AsmussenMadanPistorius05} for the pricing of equity default
swaps in \lev models.

The major open challenge in this field is the development of analytical expressions
for the \wh factors for general \lev processes. In a remarkable recent development,
\citeN{HubalekKyprianou10} generate a family of spectrally negative \lev processes
with tractable \wh factors, using results from potential theory for subordinators.
These results were later extended in \citeN{KyprianouRivero08} and applied to problems
in actuarial mathematics in \shortciteN{KyprianouRiveroSong09}. Moreover, in two
very recent papers \citeANP{Kuznetsov09a} \citeyear{Kuznetsov09a,Kuznetsov09b}
introduces special families of \lev processes such that the Wiener--Hopf factors
can be computed as infinite products over the roots of certain transcendental
equations. These families include processes with behavior similar to the CGMY
process, while the author shows that the numerical computation of the infinite
products can be performed quite efficiently.

This paper is structured as follows: in section \ref{ch3:sup-Levy}, we briefly
review \lev processes and prove the analyticity of the characteristic function
of the supremum. In section \ref{wh-section}, we review the \wh factorization,
prove its analytic extension and invert it in time. In section \ref{Levy-ex-prop},
we present some examples of popular \lev models and comment on the continuity
of their laws. Finally, in section \ref{ch3:sc5}, we derive valuation formulas
for lookback and one-touch options as well as for equity default swaps.

\begin{IREM}
This paper is intimately tied to, and intended to be read together with, the
companion paper \citeANP{EberleinGlauPapapantoleon08} 
\citeyear{EberleinGlauPapapantoleon08}, which will be abbreviated
EGP in the sequel. In particular, we will make heavy use of the notation and
results from that paper.
\end{IREM}

\section{L\'evy processes}
\label{ch3:sup-Levy}

We start by fixing the notation that will be used throughout the paper and providing
some estimates on the exponential moments of a \lev process. Then, we prove the
analytic extension of the characteristic function of the supremum and the infimum
of a \lev process, sampled either at a fixed time or at an independent, exponentially
distributed time.

\subsection{Notation}

Let $\mathscr{B}=(\Omega,\F,\bF,P)$ be a complete stochastic basis in the sense
of \citeN[I.1.3]{JacodShiryaev03}, where $\F=\F_T$, $0<T\le\infty$ and
$\bF=(\F_t)_{0\le t\le T}$.
Let $L=(L_t)_{0\le t\le T}$ be a \emph{\lev process} on this stochastic basis,
i.e. $L$ is a semimartingale with \emph{independent} and \emph{stationary
increments} (PIIS), and $L_0=0$ a.s. 
We denote the \emph{triplet of predictable characteristics}
of $L$ by $(B,C,\nu)$ and the \emph{triplet of local characteristics} by
($b,c,\lambda$); using \citeN[II.4.20]{JacodShiryaev03} the two triplets are
related via
\begin{displaymath}
B_t(\omega)=bt, \quad
C_t(\omega)=ct, \quad
\nu(\omega;\ud t,\ud x)=\lambda(\ud x)\,\dt.
\end{displaymath}
We assume that the following condition is in force.
\begin{assuEM}
There exists a constant $M>1$ such that
\begin{align*}
\int_{\{|x|>1\}} \e^{ux}\lambda(\dx)<\infty,
  \qquad\quad
 \forall u\in[-M,M].
\end{align*}
\end{assuEM}
The triplet of predictable characteristics of a PIIS determines the law of
the random variables; more specifically, for a \lev process we know from
the \lk formula that
\begin{align}\label{lk-form}
E\left[\e^{iuL_t}\right]
= \exp\big(t\cdot\kappa(iu)\big),
\end{align}
for all $t\in[0,T]$ and all $u\in\R$, where the cumulant generating function
is
\begin{align}\label{levy-cmu}
\kappa(u) = ub + \frac{u^2}2c
          + \int_\R(\e^{ux}-1-ux)\lambda(\dx).
\end{align}

Assumption ($\mathbb{EM}$) entails that the \lev process $L$ is a \emph{special}
and \emph{exponentially special} semimartingale, hence the use of a truncation
function can be and has been omitted. Applying Theorem 25.3 in \citeN{Sato99} we
get that
\begin{align*}
E \left[ \e^{uL_t} \right] <\infty,
  \qquad\quad
 \forall u\in[-M,M], \;\; \forall t\in[0,T].
\end{align*}

Recall that for any stochastic process $X$ we denote by $\overline{X}$ the
supremum and by $\underline{X}$ the infimum process of $X$ respectively.

In the sequel, we will provide the proofs of the results for the supremum
process. The proofs for the infimum process can be derived analogously or using
the duality between the supremum and the infimum process; see the following
remark.

\begin{remark}\label{dual-Levy}
Let $L$ be a \lev process with local characteristics $(b,c,\lambda)$.
The \emph{dual} of the \lev process $L$ defined by $L':=-L$, has the
triplet of local characteristics $(b',c',\lambda')$ where $b'=-b$,
$c'=c$ and $1_A(x)*\lambda'=1_A(-x)*\lambda$, $A\in\mathcal B(\Rmz)$.
Moreover, we have that
\begin{align*}
\infL_t = \inf_{0\le s\le t} L_s
        = - \sup_{0\le s\le t} (-L_s)
        = -\overline{L'}_t \,.
\end{align*}
\end{remark}

\subsection{Analytic extension, fixed time case}
In this section, we establish the existence of an analytic extension of the
characteristic function of the \emph{supremum} and the \emph{infimum} of a
\lev process, and derive explicit bounds for the exponential moments of the
supremum and infimum process.

The next lemma endows us with a link between the existence of exponential
moments of a measure $\varrho$ and the analytic extension of the characteristic
function $\widehat{\varrho}$.

\begin{lemma}\label{lem:sec5:neu1}
Let $\varrho$ be a measure on the space $(\R,\mathcal{B}(\R))$. If
$\int\e^{ux}\varrho(\dx)< \infty$ for all
$u\in [-a,b]$ with $a,b\ge0$, then the characteristic function $\widehat\varrho$
has an extension that is continuous on $(-\infty,\infty)\times i[-b,a]$
and is analytic in the interior of the strip, $(-\infty,\infty)\times i(-b,a)$.
Moreover $\widehat\varrho(u)=\int\e^{iux}\varrho(\dx)$ for all $u \in \C$
with $\Im(u)\in[-b,a]$.
\end{lemma}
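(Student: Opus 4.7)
The plan is to exhibit the extension explicitly by the formula $\Phi(u) := \int_\R \e^{iux}\varrho(\dx)$ for complex $u$ with $\Im(u)\in[-b,a]$, and then verify the three claims (well-definedness/absolute convergence, continuity on the closed strip, and analyticity on the open strip) in turn. The key mechanical input is a single integrable majorant: for $u = v + iw$ with $v\in\R$, $w\in[-b,a]$, we have $|\e^{iux}| = \e^{-wx}$, and since $-w\in[-a,b]$, the inequality
\begin{equation*}
\e^{-wx} \le \e^{bx}\mathbf{1}_{\{x\ge 0\}} + \e^{-ax}\mathbf{1}_{\{x<0\}} \le \e^{bx} + \e^{-ax}
\end{equation*}
holds uniformly in $w\in[-b,a]$, and the right-hand side is $\varrho$-integrable by hypothesis (take $u=b$ and $u=-a$ in the assumption). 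This immediately yields well-definedness of $\Phi$ on the closed strip and that $\Phi(u)=\widehat\varrho(u)$ for real $u$.

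For continuity on $\R\times i[-b,a]$, I would take a sequence $u_n\to u$ in the closed strip; since $\e^{iu_nx}\to\e^{iux}$ pointwise in $x$ and is dominated by the fixed integrable bound above, dominated convergence gives $\Phi(u_n)\to\Phi(u)$.

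For analyticity on the open strip $\R\times i(-b,a)$, I would fix $u_0$ in the open strip, pick a small closed disk $\overline{D}\subset \R\times i(-b,a)$ around it, and apply Morera's theorem. For any closed triangle $\Delta\subset \overline{D}$, the integrand $u\mapsto \e^{iux}$ is entire in $u$ for each fixed $x$, so $\oint_{\partial\Delta}\e^{iux}\,du = 0$. Uniform boundedness of $|\e^{iux}|$ on $\overline{D}$ by the same integrable majorant (with slightly enlarged strip parameters $a,b$ replaced by $a',b'$ that still bracket the imaginary parts occurring in $\overline{D}$) justifies Fubini:
\begin{equation*}
\oint_{\partial\Delta}\Phi(u)\,du = \int_\R \left( \oint_{\partial\Delta} \e^{iux}\,du\right) \varrho(\dx) = 0.
\end{equation*}
Morera then yields analyticity of $\Phi$ in a neighborhood of each $u_0$, hence on the whole open strip.

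The only delicate point is making sure the domination works uniformly on compact subsets of the open strip when invoking Fubini/Morera; this is handled by noting that any compact $K\subset \R\times i(-b,a)$ has $\Im(K)\subset[-b',a']$ for some $-b<-b'\le a'<a$, so the bound $\e^{b'x}+\e^{-a'x}$ (integrable, since $[-a',b']\subset[-a,b]$) applies uniformly. No other significant obstacle arises; the argument is a direct application of the dominated convergence theorem together with Morera's theorem and Fubini.
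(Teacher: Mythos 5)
Your proposal is correct and follows essentially the same route as the paper's proof: the explicit extension $\int\e^{iux}\varrho(\dx)$, the integrable majorant $\e^{bx}\mathbf{1}_{\{x\ge0\}}+\e^{-ax}\mathbf{1}_{\{x<0\}}$ for well-definedness and dominated-convergence continuity, and Morera combined with Fubini and Cauchy for analyticity. The only difference is cosmetic: the paper applies the single majorant $h$ on the whole closed strip to justify Fubini (via the bound $\ell(\gamma)\int h\,\ud\varrho<\infty$), so your extra care with shrunken parameters $a',b'$ on compact subsets is unnecessary but harmless.
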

\begin{proof}
The function $u\mapsto\e^{iux}$ clearly extends to an entire function and
the extension
\begin{align*}
\widehat\varrho(u):= \int\e^{iux}\varrho(\dx)
 \quad \big(u\in\C \text{ with }\Im(u)\in [-b,a]\big)
\end{align*}
is well-defined since
\begin{align*}
\left|\e^{iux}\right|
 = \e^{-\Im(u)x}\le \e^{-ax}1_{\{x\le 0\}} + \e^{bx}1_{\{x>0\}}
 =: h(x),
\end{align*}
for $u \in \C$ with $\Im(u)\in [-b,a]$, and we have that $h\in L^1(\varrho)$
by assumption. Moreover, Lebesgue's dominated convergence theorem yields that
this extension is continuous.

We will prove the analyticity of $\widehat\varrho$ in
$(-\infty,\infty)\times i(-b,a)$ using the theorem of Morera (cf. for example
Theorem 10.17 in \citeNP{Rudin87}). Let $\gamma$ be a triangle in the open set
$(-\infty,\infty)\times i(-b,a)$; the theorems of Fubini and Cauchy immediately
yield
\begin{align*}
\int_{\partial\gamma} \widehat\varrho(u) \du
 = \int_{\partial\gamma}\int \e^{iux} \varrho(\dx)\du
 = \int \int_{\partial\gamma}\e^{iux}\du\,\varrho(\dx)
 =0,
\end{align*}
as $u\mapsto\e^{iux}$ is analytic for every fixed $x\in\R$. Then, the
analyticity of $\widehat\varrho$ follows from Morera's theorem. For a
justification of the application of Fubini's theorem it is enough to
note that
\begin{align*}
\int\int_{\partial\gamma}\left|\e^{iux}\right| \du\,\varrho(\dx)
 \le \int \int_{\partial\gamma} h(x) \du \, \varrho(dx)
 = \ell(\gamma) \int h(x) \varrho(\dx)
 < \infty,
\end{align*}
where $\ell(\gamma)$ denotes the length of the curve $\partial\gamma$.
\end{proof}

\begin{lemma}\label{rem-expomom}
Let $Y$ be a \lev process and a special semimartingale with $E[Y_t]=0$ for some,
and hence for every, $t>0$. Then
\begin{align*}
E\big[\e^{Y_t^*}\big] \le 8 E\big[\e^{|Y_t|}\big],
\end{align*}
where $Y^*_t=\sup_{0\le s\le t}|Y_s|$.
\end{lemma}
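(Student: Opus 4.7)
The plan is to reduce the claim to Doob's $L^2$ maximal inequality applied twice, once to $Y$ and once to $-Y$. First, I would assume without loss of generality that $E[\e^{|Y_t|}]<\infty$, since otherwise the inequality is trivially true. This finiteness transfers to $E[\e^{\pm Y_t/2}]$ and $E[\e^{\pm Y_t}]$. Since $Y$ is a \lev process that is a special semimartingale, the condition $\int_{\{|x|>1\}}|x|\lambda(\dx)<\infty$ holds, which in turn gives $E[|Y_t|]<\infty$; together with $E[Y_t]=0$ and the stationary, independent increments of $Y$, this immediately yields that $Y$ is a martingale (hence so is $-Y$).

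Next, I would decompose the running absolute supremum. Writing $Y_t^*=\max(\supL_t,-\infL_t)$, and observing that both quantities on the right are nonnegative since $Y_0=0$, one has
\begin{align*}
\e^{Y_t^*}
= \max\!\big(\e^{\supL_t},\e^{-\infL_t}\big)
\le \e^{\supL_t}+\e^{-\infL_t}
= \sup_{0\le s\le t}\e^{Y_s}+\sup_{0\le s\le t}\e^{-Y_s}.
\end{align*}
Taking expectations, the task is reduced to bounding each of the two maximal expectations by $4E[\e^{\pm Y_t}]$.

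For this, I would apply Jensen's inequality to the convex function $x\mapsto\e^{x/2}$: since $Y$ is a martingale and $\e^{Y_s/2}$ is integrable, $(\e^{Y_s/2})_{0\le s\le t}$ is a nonnegative submartingale. Doob's $L^2$ maximal inequality then yields
\begin{align*}
E\Big[\sup_{0\le s\le t}\e^{Y_s}\Big]
= E\Big[\big(\sup_{0\le s\le t}\e^{Y_s/2}\big)^{\!2}\Big]
\le 4\,E\big[\e^{Y_t}\big],
\end{align*}
and symmetrically $E[\sup_{0\le s\le t}\e^{-Y_s}]\le 4 E[\e^{-Y_t}]$. Combining these with the trivial bounds $\e^{Y_t}\le\e^{|Y_t|}$ and $\e^{-Y_t}\le\e^{|Y_t|}$ gives the claimed factor $4+4=8$.

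The only point requiring care is the justification of the submartingale property and the use of Doob's inequality, i.e.\ checking that $\e^{Y_s/2}\in L^2$ uniformly in $s\in[0,t]$. This follows from $E[\e^{uY_s}]=\e^{s\kappa(u)}$ (valid under \EM) which is finite on $[0,t]$ as soon as it is finite at $t$ for $u\in\{-1,1\}$. I do not anticipate any other obstacle.
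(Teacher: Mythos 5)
Your proof is correct, and it takes a genuinely different route from the paper's. The paper expands $\e^{Y_t^*}$ into the series $\sum_n (Y_t^*)^n/n!$, applies monotone convergence, and then invokes Remark 25.19 in Sato (1999) — the moment maximal inequality $E[(Y_t^*)^n]\le 8E[|Y_t|^n]$ for centered \lev processes, with the constant $8$ uniform in $n$ — termwise, before resumming. You instead exploit the martingale structure directly: splitting $\e^{Y_t^*}\le \sup_s\e^{Y_s}+\sup_s\e^{-Y_s}$ and applying Doob's $L^2$ maximal inequality to the nonnegative submartingales $\e^{\pm Y_s/2}$ gives the bound $4E[\e^{Y_t}]+4E[\e^{-Y_t}]\le 8E[\e^{|Y_t|}]$, recovering exactly the same constant as $4+4$. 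All the supporting steps check out: the reduction to the case $E[\e^{|Y_t|}]<\infty$ is legitimate; a centered \lev process that is a special semimartingale is indeed a martingale; finiteness of $E[\e^{\pm Y_s/2}]$ for all $s\le t$ follows from the time-independent exponential-moment criterion for \lev measures; and c\`adl\`ag paths justify Doob's inequality. Your argument has the advantage of resting only on Doob's classical inequality rather than on a specialized maximal moment estimate from Sato, and it makes transparent where the constant $8$ comes from; the paper's argument, on the other hand, does not need the martingale property as such (only the centering hypothesis feeding into Sato's inequality) and avoids the submartingale/integrability bookkeeping by working with polynomial moments.
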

\begin{proof}
Using that $\frac{(Y^*_t)^n}{n!}$ is positive for every $n\ge0$ and the
monotone convergence theorem, we get
\begin{align*}
E\big[\e^{Y_t^*}\big]
 = E \sum_{n=0}^\infty \frac{(Y^*_t)^n}{n!}
 =\sum_{n=0}^\infty E\frac{(Y^*_t)^n}{n!}.
\end{align*}
Now, Remark 25.19 in \citeN{Sato99} yields
\begin{align*}
E(Y^*_t)^n \le 8E|Y_t|^n, \qquad\text{for every }n\ge 1,
\end{align*}
while for $n=0$ the inequality holds trivially. Hence, we get
\begin{equation*}
\sum_{n=0}^\infty E\frac{(Y^*_t)^n}{n!}
 \le 8 \sum_{n=0}^\infty E\frac{|Y_t|^n}{n!}
 =8 E \sum_{n=0}^\infty \frac{|Y_t|^n}{n!}
 =8 E\big[\e^{|Y_t|}\big].  \qedhere
\end{equation*}
\end{proof}

Next, notice that under assumption $(\mathbb{EM})$ we have that
\begin{align*}
\int_\R \left| \e^{Mx}-1-Mx \right| \lambda(\dx) < \infty
 \quad\text{and }
  \int_\R \left| \e^{-Mx}-1+Mx \right| \lambda(\dx) < \infty .
\end{align*}
Let us introduce the following notation:
\begin{align}\label{a-up}
\overline{\alpha}(M)
 := M|b| + \frac{1}{2} cM^2
  + \int_\R \left| \e^{Mx} - 1 - Mx \right| \lambda(\dx)
\end{align}
and
\begin{align}\label{a-down}
\underline{\alpha}(M)
 := M|b|+ \frac{1}{2} cM^2
  + \int_\R \left| \e^{-Mx} - 1 + Mx \right| \lambda(\dx).
\end{align}

\begin{lemma}\label{lem:sec5:neu2}
Let $L=(L_t)_{0\le t\le T}$ be a \lev process that satisfies assumption
$(\mathbb{EM})$. Then we have the following estimates
\begin{align*}
E\big[\e^{u\supL_t}\big]
 \le E\big[\e^{M\supL_t}\big]
 \le 8\mathscr{C}(t,M) < \infty \qquad(u\le M),
\end{align*}
and
\begin{align*}
E\big[\e^{-u\infL_t}\big]
 \le E\big[\e^{-M\infL_t}\big]
 \le 8\mathscr{C}(t,M) < \infty\qquad(u\le M),
\end{align*}
where
$\mathscr{C}(t,M):=\e^{t\overline{\alpha}(M)} + \e^{t\underline{\alpha}(M)}$.
\end{lemma}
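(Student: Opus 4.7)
The first inequality $E[e^{u\supL_t}] \le E[e^{M\supL_t}]$ for $u \le M$ is immediate from $\supL_t \ge L_0 = 0$ and monotonicity of $r \mapsto e^{rx}$ on $x \ge 0$.

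For the main bound, I plan to center $L$ at its mean and invoke Lemma~\ref{rem-expomom}. Under \EM the Lévy--Khintchine formula \eqref{lk-form}--\eqref{levy-cmu} gives $E[L_t] = bt$, so the process $Y_t := M(L_t - bt)$ is a mean-zero Lévy process, and because $L$ has all exponential moments in $[-M,M]$, so does $Y$; in particular $Y$ is a special semimartingale to which Lemma~\ref{rem-expomom} applies. From the identity $ML_s = Y_s + Mbs$ together with $0 \le \supL_t \le \sup_{0 \le s \le t}|L_s|$ one obtains the pointwise bound
\[
M\supL_t \le Y^*_t + M|b|t,
\]
whence $E[e^{M\supL_t}] \le e^{M|b|t}\, E[e^{Y^*_t}]$. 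Lemma~\ref{rem-expomom} combined with $e^{|Y_t|} \le e^{Y_t} + e^{-Y_t}$ then reduces the task to bounding the two Laplace transforms $E[e^{\pm Y_t}] = e^{\mp Mbt + t\kappa(\pm M)}$, which are finite by \EM.

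The closing step is sign bookkeeping. Since $e^{\pm Mx} \mp Mx - 1 \ge 0$ for all $x \in \R$, the integrals in \eqref{a-up} and \eqref{a-down} equal the signed ones appearing in $\kappa(\pm M)$, and a direct substitution yields
\begin{align*}
M|b|t + t\bigl(\kappa(M) - Mb\bigr) &= t\overline\alpha(M), \\
M|b|t + t\bigl(\kappa(-M) + Mb\bigr) &= t\underline\alpha(M).
\end{align*}
Assembling everything gives $E[e^{M\supL_t}] \le 8\mathscr{C}(t,M) < \infty$. The analogous estimate for $-\infL_t$ follows from Remark~\ref{dual-Levy}: applying the bound just proved to $L' = -L$, whose local characteristics interchange the roles of $\overline\alpha$ and $\underline\alpha$ while leaving $\mathscr{C}$ invariant, and using $-\infL_t = \overline{L'}_t$, yields $E[e^{-M\infL_t}] \le 8\mathscr{C}(t,M)$. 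The only real obstacle is checking that the $e^{M|b|t}$ factor lost to the triangle inequality is exactly absorbed by the $M|b|$ terms inside $\overline\alpha(M)$ and $\underline\alpha(M)$; once this sign bookkeeping is in place, the argument is routine.
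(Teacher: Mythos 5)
Your proposal is correct and follows essentially the same route as the paper's proof: center the process as $Y = M(L_\cdot - b\,\cdot)$ (which is exactly the paper's $Z = M(\sqrt{c}W + L^d)$), apply Lemma~\ref{rem-expomom}, split $\e^{|Y_t|} \le \e^{Y_t} + \e^{-Y_t}$, and evaluate the resulting Laplace transforms via the cumulant, with the sign bookkeeping matching $\overline{\alpha}(M)$ and $\underline{\alpha}(M)$ precisely because $\e^{\pm Mx} \mp Mx - 1 \ge 0$. The only cosmetic difference is that you handle the infimum via the duality of Remark~\ref{dual-Levy} rather than by repeating the argument, which the paper itself notes as an equivalent option.
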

\begin{proof}
For $u\le M$  we have
$$\e^{u\supL_t}\le \e^{M\supL_t},$$
since $\supL_t=\sup_{0\le s\le t} L_s$ is nonnegative. Further notice that
\begin{align*}
\supL_t
 = \sup_{0\le s\le t} \big[bs+\sqrt{c}W_s + L^d_s\big]
 \le \sup_{0\le s\le t} \big[\sqrt{c}W_s + L^d_s\big]
   + \sup_{0\le s\le t} [bs],
\end{align*}
where $L_t= bt+\sqrt{c}W_t+L^d_t$ denotes the canonical decomposition of $L$,
with Brownian motion $W$ and a purely discontinuous martingale $L^d=x*(\mu-\nu)$.
Let us further denote by
$$Y_s:= \sqrt{c}W_s + L^d_s.$$
The process $Y$ is not only a martingale but also a  \lev process and a
special semimartingale with local characteristics $(0,c,\lambda)$. We have
\begin{align*}
\supL_t \le \sup_{0\le s\le t} Y_s + |b|t \le Y^*_t + |b|t,
\end{align*}
hence we get that
\begin{align}\label{mom-est-1}
E\big[\e^{M\supL_t}\big]
 \le E\big[\e^{M(Y^*_t+|b|t) }\big]
 = \e^{M|b|t}E\big[\e^{MY^*_t}\big]
 \le 8\e^{M|b|t}E\big[\e^{M|Y_t|}\big],
\end{align}
using Lemma \ref{rem-expomom} for the special semimartingale $Z:=MY$,
which is a \lev process satisfying $E[Z_t]=0$ for every $0\le t\le T$.

Now it is sufficient to notice that
\begin{align}\label{mom-est-2}
E\big[\e^{M|Y_t|}\big]
 \le E\big[\e^{MY_t}\big]  + E\big[\e^{- MY_t}\big],
\end{align}
where Theorem 25.17 in \citeN{Sato99} yields
\begin{align}\label{mom-est-3}
E\big[\e^{MY_t}\big]
 &= \exp\Big(t\frac{cM^2}{2}
             + t\int_\R\left(\e^{Mx}-1-Mx\right)\lambda(\dx)\Big) \nonumber\\
 &\le \e^{(\overline{\alpha}(M) - M|b|)t};
\end{align}
similarly,
\begin{align}\label{mom-est-4}
E\big[\e^{- MY_t}\big]
 &\le \e^{(\underline{\alpha}(M)-M|b|)t}.
\end{align}
Summarizing, we can conclude from \eqref{mom-est-1}--\eqref{mom-est-4} that
\begin{align*}
E\big[\e^{M\supL_t}\big]
 &\le 8 \e^{M|b|t}\Big( \e^{( \overline{\alpha}(M) - M|b|)t}
     + \e^{(\underline{\alpha}(M)-M|b|)t} \Big) \\
 &=8 \Big(\e^{\overline{\alpha}(M)t} + \e^{\underline{\alpha}(M) t} \Big),
\end{align*}
as well as
\begin{align*}
E\big[\e^{-M\infL_t}\big]
 &\le 8 \Big(\e^{\overline{\alpha}(M)t} + \e^{\underline{\alpha}(M) t} \Big). \qedhere
\end{align*}
\end{proof}

A corollary of these results is the existence of an analytic continuation for
the characteristic function $\varphi_{\supL_t}$ of the supremum, resp.
$\varphi_{\infL_t}$ of the infimum, of a \lev process.

\begin{corollary}\label{cor:sec5:neucor1}
Let $L$ be a \lev process that satisfies assumption $(\mathbb{EM})$.
Then, the characteristic function $\varphi_{\overline{L}_t}$ of $\supL_t$,
resp. $\varphi_{\underline{L}_t}$ of $\infL_t$, possesses a continuous
extension
\begin{align*}
\varphi_{\overline{L}_t}(z)
 = \int_\R\e^{izx}P_{\supL_t}(\dx),
 \quad \text{ resp.}\quad
\varphi_{\underline{L}_t}(z)
= \int_\R\e^{izx}P_{\infL_t}(\dx),
\end{align*}
to the half-plane $z\in \{z\in\C:  -M \le \Im z\}$, resp.
$z\in \{z \in\C: \Im z\le M\}$, that is analytic in the interior of the
half-plane $\{z\in\C:  -M<\Im z\}$, resp. $\{z \in\C:  \Im z< M\}$.
\end{corollary}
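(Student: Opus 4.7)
The plan is to apply Lemma \ref{lem:sec5:neu1} to the laws $P_{\supL_t}$ and $P_{\infL_t}$, feeding in exponential moment bounds supplied by Lemma \ref{lem:sec5:neu2} together with the one-sided sign of the supremum and the infimum. The key observation is that since $\supL_t\ge 0$ (resp.\ $\infL_t\le 0$) pathwise, exponential moments come for free on one side, while Lemma \ref{lem:sec5:neu2} provides them up to order $M$ on the other side; together, these cover an arbitrarily wide strip of real orders, which translates via Lemma \ref{lem:sec5:neu1} into a half-plane of analyticity.

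More concretely, for the supremum I would observe that $\int_\R \e^{ux} P_{\supL_t}(\dx) \le 1$ for every $u\le 0$, since the integrand is bounded by $1$ on the support $[0,\infty)$. Combining this with the estimate $\int_\R \e^{ux} P_{\supL_t}(\dx) = E[\e^{u\supL_t}] \le 8\mathscr{C}(t,M) < \infty$ from Lemma \ref{lem:sec5:neu2} for $u\in[0,M]$, the measure $P_{\supL_t}$ has finite exponential moment of order $u$ for every $u\in[-a,M]$, with $a\ge 0$ arbitrary. Lemma \ref{lem:sec5:neu1}, applied with $b=M$ and this arbitrary $a$, then yields a continuous extension of $\varphi_{\supL_t}$ to the strip $(-\infty,\infty)\times i[-M,a]$, analytic on its interior and represented by $\int_\R\e^{izx}P_{\supL_t}(\dx)$. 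Taking the union of these strips over $a\ge 0$ produces a continuous extension to the closed half-plane $\{z\in\C:\Im z\ge -M\}$ and analyticity on its open interior.

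The infimum case is entirely symmetric. Using $\infL_t\le 0$ one obtains the trivial bound $\int_\R\e^{ux}P_{\infL_t}(\dx) \le 1$ for $u\ge 0$, while Lemma \ref{lem:sec5:neu2} gives $\int_\R\e^{ux}P_{\infL_t}(\dx) = E[\e^{-(-u)\infL_t}] \le 8\mathscr{C}(t,M) < \infty$ for $u\in[-M,0]$. Hence $P_{\infL_t}$ has finite exponential moments of order $u$ on $[-M,b]$ for every $b\ge 0$, and Lemma \ref{lem:sec5:neu1} applied with $a=M$ and $b$ arbitrary, together with the same union argument, yields the continuous extension of $\varphi_{\infL_t}$ to $\{z\in\C:\Im z\le M\}$ with analyticity on its interior. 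Alternatively, this case follows immediately from the supremum case and the duality recorded in Remark \ref{dual-Levy}.

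I do not anticipate any substantive obstacle, since the analytic work has already been carried out in the two preceding lemmas. The only point that requires attention is the sign flip between the interval of finite real exponential moments and the resulting strip of imaginary parts in Lemma \ref{lem:sec5:neu1}; this has to be combined correctly with the one-sided support of $\supL_t$ and $\infL_t$ in order to extend from a strip to a full half-plane.
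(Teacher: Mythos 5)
Your proposal is correct and follows exactly the paper's route: the paper's own proof simply states that the corollary is a direct consequence of Lemmata \ref{lem:sec5:neu1} and \ref{lem:sec5:neu2}, and you have filled in precisely the intended details (the trivial one-sided bound from $\supL_t\ge0$, resp.\ $\infL_t\le0$, the moment bound up to order $M$ from Lemma \ref{lem:sec5:neu2}, and the correct sign flip in Lemma \ref{lem:sec5:neu1} to pass from strips to the stated half-planes).
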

\begin{proof}
This is a direct consequence of Lemmata \ref{lem:sec5:neu1} and \ref{lem:sec5:neu2}.
\end{proof}

\begin{remark}
One could derive the statement of Corollary \ref{cor:sec5:neucor1} using
the submultiplicativity of the exponential function and Theorem 25.18 in Sato
\citeyear{Sato99}, see Lemma 5 in \citeN{KyprianouSurya05}. However, we
will need the estimates of Lemma \ref{lem:sec5:neu2} in the following sections.
\end{remark}

\subsection{Analytic extension, exponential time case}
The next step is to establish a relationship between the (analytic extension
of the) characteristic function of the supremum, resp. infimum, at a
\emph{fixed} time and at an \emph{independent} and \emph{exponentially
distributed} time. Independent exponential times play a fundamental role in
the fluctuation theory of \lev processes, since they enjoy a property similar
to infinity: the time left after an exponential time is again exponentially
distributed.

Let $\theta$ denote an exponentially distributed random variable with
parameter $q>0$, independent of the \lev process $L$. We denote by
$\supL_\theta$, resp. $\infL_\theta$, the supremum, resp. infimum, process of
$L$ sampled at $\theta$, that is
\begin{align*}
\supL_\theta = \sup_{0\leq u\leq\theta} L_u
 \quad \text{ and }\quad
\infL_\theta = \inf_{0\leq u\leq\theta} L_u.
\end{align*}

\begin{lemma}\label{lem:sec3:3}
Let $L=(L_t)_{0\le t\le T}$ be a \lev process that satisfies assumption
$(\mathbb{EM})$, and let $\theta\sim \mathrm{Exp}(q)$ be independent of
the process $L$.
\par
If $q>\overline{\alpha}(M)\vee\underline{\alpha}(M)$, then the characteristic
function $\varphi_{\overline{L}_{\theta}}$ of $\supL_{\theta}$ possesses a
continuous extension
\begin{align}\label{ch3:eqr-fneu1}
\varphi_{\overline{L}_{\theta}}(z)
 &= \int_\R\e^{izx}P_{\supL_{\theta}}(\dx)
  = q \int_0^\infty \e^{-qt} E\big[\e^{iz\supL_t}\big]\dt
\end{align}
to the half-plane $z\in \{z\in\C:  -M \le \Im z\}$, that is analytic
in the interior of the half-plane $\{z\in\C:  -M<\Im z\}$.
\par
If $q>\overline{\alpha}(M) \vee \underline{\alpha}(M)$, then the characteristic
function $\varphi_{\underline{L}_{\theta}}$ of $\infL_{\theta}$ possesses a
continuous extension
\begin{align}\label{ch3:eqr-fneu2}
\varphi_{\underline{L}_{\theta}}(z)
 &= \int_\R\e^{izx}P_{\infL_{\theta}}(\dx)
  = q \int_0^\infty \e^{-qt} E\big[\e^{iz\infL_t}\big]\dt
\end{align}
to the half-plane  $z\in \{z \in\C: \Im z\le M\}$, that is analytic
in the interior of the half-plane $\{z \in\C: \Im z< M\}$.
\end{lemma}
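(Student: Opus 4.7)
The plan is to reduce this to Lemma \ref{lem:sec5:neu1} by first establishing that $\supL_\theta$ has a finite exponential moment of order $M$, and then to extract the Laplace representation \eqref{ch3:eqr-fneu1} via Fubini and Corollary \ref{cor:sec5:neucor1}.

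The first step is the moment bound: since $\theta \sim \mathrm{Exp}(q)$ is independent of $L$, one obtains, for $0 \le u \le M$,
\begin{align*}
E\bigl[\e^{u\supL_\theta}\bigr]
 = q\int_0^\infty \e^{-qt}\,E\bigl[\e^{u\supL_t}\bigr]\,\dt
 \le 8q\int_0^\infty \e^{-qt}\mathscr{C}(t,M)\,\dt,
\end{align*}
by Lemma \ref{lem:sec5:neu2}, and the right-hand side is finite exactly because $q > \overline{\alpha}(M)\vee\underline{\alpha}(M)$. For $u\le 0$ the moment is trivially bounded by $1$ since $\supL_\theta\ge 0$. Applying Lemma \ref{lem:sec5:neu1} to $\varrho = P_{\supL_\theta}$ with $b = M$ and $a$ arbitrarily large then gives a continuous extension of $\varphi_{\supL_\theta}$ to the strip $(-\infty,\infty)\times i[-M,a]$ that is analytic in its interior; letting $a\to\infty$ produces the claimed half-plane extension.

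The Laplace representation is then obtained by Fubini: for $z$ with $-M\le\Im z$,
\begin{align*}
q\e^{-qt}\bigl|E[\e^{iz\supL_t}]\bigr|
 \le q\e^{-qt}\,E\bigl[\e^{-\Im(z)\supL_t}\bigr]
 \le 8q\e^{-qt}\mathscr{C}(t,M)
\end{align*}
is integrable in $t$ under the same hypothesis on $q$. This allows one to interchange the expectation over $\theta$ with the $t$-integral, and to invoke Corollary \ref{cor:sec5:neucor1} to identify the inner integrand with the analytic continuation $z\mapsto E[\e^{iz\supL_t}]$. Both sides of \eqref{ch3:eqr-fneu1} are therefore well-defined analytic functions of $z$ on the open half-plane, continuous up to the boundary, and they agree on the real line by independence of $\theta$ and $L$; the identity theorem extends this agreement to the entire half-plane.

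The infimum case is symmetric: using the companion estimate $E[\e^{-M\infL_t}] \le 8\mathscr{C}(t,M)$ of Lemma \ref{lem:sec5:neu2} and applying Lemma \ref{lem:sec5:neu1} with $a = M$ and $b$ arbitrary to the law of $\infL_\theta\le 0$, one obtains the extension to $\{\Im z\le M\}$ with analyticity in the interior. There is no genuine obstacle in the proof; the main point of care is merely to keep track of the correct sign of $\Im z$ and of the one-sided support when translating the one-sided exponential moment information into the two-sided strip statement required by Lemma \ref{lem:sec5:neu1}.
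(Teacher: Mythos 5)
Your proof is correct and follows essentially the same route as the paper: the exponential moment of $\supL_\theta$ via Lemma \ref{lem:sec5:neu2} integrated against the $\mathrm{Exp}(q)$ density, Lemma \ref{lem:sec5:neu1} for the continuous/analytic extension, and Fubini (justified by the same dominating bound) for the Laplace representation. The only cosmetic difference is your final appeal to the identity theorem, which is not needed since the Fubini interchange already yields \eqref{ch3:eqr-fneu1} pointwise for every $z$ in the closed half-plane.
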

\begin{proof}
We have that
\begin{align*}
E \big[\e^{u\supL_{\theta}}\big]
 = \int_0^\infty \int_0^\infty \e^{ux} q \e^{-qt} P_{\supL_t}(\dx) \dt
 = \int_0^\infty E\big[\e^{u\supL_t} \big]  q \e^{-qt} \dt,
\end{align*}
and, for $q>\overline{\alpha}(M)\vee\underline{\alpha}(M)$, by Lemma
\ref{lem:sec5:neu2} we get
\begin{align*}
\int_0^\infty E\big[\e^{M\supL_t} \big]  q \e^{-qt} \dt
 \le 8\Bigg( q \int_0^\infty \e^{- t\big[q- \overline{\alpha}(M)\big]}\dt
          + q \int_0^\infty \e^{- t\big[q- \underline{\alpha}(M)\big]}\dt\Bigg)
 < \infty;
\end{align*}
hence, for $u\le M$, we have
\begin{align}\label{hilfmir1}
E\big[\e^{u\supL_{\theta}}\big]
 \le E\big[\e^{M\supL_{\theta}}\big] < \infty
\qquad (q>\overline{\alpha}(M)\vee\underline{\alpha}(M)).
\end{align}
Inequality \eqref{hilfmir1}, together with Lemma \ref{lem:sec5:neu1},
implies that the characteristic function $\varphi_{\supL_{\theta}}$
has a continuous extension to the half-plane
$\{z\in\C:-M\le \Im z\}$, that is analytic in $\{z\in\C:-M<\Im z\}$,
and is given by
$$
\varphi_{\supL_{\theta}}(z) = E\big[\e^{iz\supL_{\theta}}\big],
$$
for every $z\in\C$ with $\Im z \ge -M$. Furthermore Fubini's theorem
yields
$$
E\big[\e^{iz\supL_{\theta}}\big]
 = \int_0^\infty\int_0^\infty \e^{izx} q\e^{-qt} P_{\supL_t}(\dx)\dt
 = q \int_0^\infty \e^{-qt} E\big[\e^{iz\supL_t}\big] \dt.
$$
The application of Fubini's theorem is justified since, for $\Im z\ge -M$
and $q>\overline{\alpha}(M)\vee\underline{\alpha}(M)$, we have
\begin{align*}
E\big[\big|\e^{iz\supL_{\theta}}\big|\big]
 = E\big[\e^{-\Im(z)\supL_{\theta}}\big]
 \le E\big[\e^{M\supL_{\theta}}\big] < \infty
\end{align*}
by inequality \eqref{hilfmir1}. Similarly, we prove the
assertion for the infimum.
\end{proof}

\section{The Wiener--Hopf factorization}
\label{wh-section}

We first provide a statement and brief description of the \wh factorization
of a \lev process, and then show that the \wh factorization holds true for
the analytically extended characteristic functions. Next, we invert the
\wh factorization, and derive an expression for the (analytically extended)
characteristic function of the supremum, resp. infimum, of a \lev process
in terms of the \wh factors.

\subsection{Analyticity}
Fluctuation identities for \lev processes originate from analogous
results for random walks, first derived using combinatorial methods,
see e.g. \citeN{Spitzer64} or \citeN{Feller71}. \citeN{Bingham75}
used this \linebreak
discrete-time skeleton to prove results for \lev processes;
the same approach is followed in the book of \citeN{Sato99}.
\citeANP{GreenwoodPitman80}
(\citeyearNP{GreenwoodPitman80},\citeyearNP{GreenwoodPitman80b})
proved these results for random walks and \lev processes using excursion
theory; see also the books of \citeN{Bertoin96} and \citeN{Kyprianou06}.

The \emph{Wiener--Hopf factorization}\footnote{The historical reasons
leading to the adoption of the terminology ``Wiener--Hopf'' are outlined
in section 6.6 in \citeN{Kyprianou06}.} serves as a common reference to
a multitude of statements in the fluctuation theory for \lev processes,
regarding the distributional decomposition of the excursions of a \lev
process sampled at an independent and exponentially distributed time.
The following statement relates the characteristic function of the supremum,
the infimum, and the \lev process itself. Let $L$ be a \lev process and
$\theta$ an independent, exponentially distributed time with parameter $q$;
then we have that
\begin{align*}
E\big[\e^{iz L_\theta}\big]
 = E\big[\e^{iz\overline L_\theta}\big]E\big[\e^{iz\underline L_\theta}\big]
\end{align*}
or equivalently,
\begin{align*}
\frac{q}{q-\kappa(iz)}=\varphi^+_q(z)\varphi^-_q(z), \qquad z\in\R;
\end{align*}
here $\kappa$ denotes the cumulant generating function of $L_1$, cf.
\eqref{levy-cmu}, and $\varphi^+_q$, $\varphi^-_q$ denote the so-called
Wiener--Hopf factors.

In the sequel, we will make use of the Wiener--Hopf factorization as stated in
the beautiful book of \citeN{Kyprianou06}, and prove the analytic extension of
the Wiener--Hopf factors to the open half-plane $\{z\in\C: \Im z>-M\}$.

Recall the definitions of \eqref{a-up} and \eqref{a-down}, and let us
denote by
\begin{align*}
\alpha^*(M)
 := \max\Big\{\overline{\alpha}(M),\underline{\alpha}(M)\Big\}.
\end{align*}

\begin{theorem}[Wiener--Hopf factorization]\label{thm:wh}
Let $L$ be a \lev process  that satisfies assumption $(\mathbb{EM})$
(and is not a compound Poisson process). The Laplace transform of
$\supL_{\theta}$, resp. $\infL_{\theta}$, at an independent and exponentially
distributed time $\theta$, $\theta\sim \mathrm{Exp}(q)$, with $q>\alpha^*(M)$,
can be identified from the Wiener--Hopf factorization of $L$ via
\begin{align}\label{WH-sup}
E\big[\e^{-\beta\supL_{\theta}}\big]
 &= \int_0^\infty q E\big[\e^{-\beta\supL_t}\big] \e^{-qt} \dt
  = \frac{\overline\kappa(q,0)}{\overline\kappa(q,\beta)}
\end{align}
and
\begin{align}\label{WH-inf}
E\big[\e^{\beta\infL_{\theta}}\big]
 &=  \int_0^\infty q E\big[\e^{\beta\infL_t}\big] \e^{-qt} \dt
  = \frac{\underline{\kappa}(q,0)}{\underline\kappa(q,\beta)}
\end{align}
for $\beta \in \{\beta\in\C: \Re(\beta)>-M\}$.
The Laplace exponent of the ascending, resp. descending, ladder
process $\overline\kappa(\alpha,\beta)$, resp.
$\underline\kappa(\alpha,\beta)$, for $\alpha\geq \alpha^*(M)$ and
$\overline{k},\underline{k}>0$, has an analytic extension to
$\beta\in \{\beta\in\C: \Re(\beta)>-M\}$ and is given by
\begin{align}\label{kappa}
\overline\kappa(\alpha,\beta)
 &= \overline{k}
    \exp \Bigg(\int_0^\infty\int_{(0,\infty)}(\e^{-t}-\e^{-\alpha t-\beta x})\frac1t P_{L_t}(\dx)\dt\Bigg),
\end{align}
and
\begin{align}\label{kappa-hat}
\underline\kappa(\alpha,\beta)
 &= \underline{k}
    \exp\Bigg(\int_0^\infty\int_{(-\infty,0)}(\e^{-t}-\e^{-\alpha t+\beta x})\frac1tP_{L_t}(\dx)\dt\Bigg).
\end{align}
\end{theorem}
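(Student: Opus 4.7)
The plan is to reduce to the real-$\beta$ Wiener--Hopf factorization (see, e.g., Theorem 6.16 in \citeN{Kyprianou06}) and then analytically continue each side of \eqref{WH-sup}--\eqref{kappa-hat} in $\beta$ to the half-plane $\{\Re\beta>-M\}$. Since the infimum identities \eqref{WH-inf} and \eqref{kappa-hat} follow from the supremum identities by passing to the dual \lev process $L':=-L$ of Remark \ref{dual-Levy}, I shall focus on the supremum case throughout.

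For the left-hand side of \eqref{WH-sup}, note that $\supL_\theta\ge 0$ and, by Lemma \ref{lem:sec3:3}, $E[\e^{M\supL_\theta}]<\infty$ whenever $q>\alpha^*(M)$. Applying Lemma \ref{lem:sec5:neu1} to the measure $P_{\supL_\theta}$ (with $a=M$ and $b$ arbitrarily large) extends $\beta\mapsto E[\e^{-\beta\supL_\theta}]$ continuously to $\{\Re\beta\ge-M\}$ and analytically to its interior; the Fubini step that converts this into $q\int_0^\infty\e^{-qt}E[\e^{-\beta\supL_t}]\dt$ is identical to the one used in the proof of \eqref{ch3:eqr-fneu1}.

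For the right-hand side, the key step is to show that the Frullani-type double integral in \eqref{kappa} converges absolutely and is analytic in $\beta$ on every compact subset $K\subset\{\Re\beta>-M\}$, for each fixed $\alpha\ge\alpha^*(M)$. Choose $\epsilon>0$ such that $-\Re\beta\le M-\epsilon$ on $K$. On the tail $t\ge 1$, the bound $|\e^{-t}-\e^{-\alpha t-\beta x}|\le\e^{-t}+\e^{-\alpha t+(M-\epsilon)x}$, combined with $\int_{(0,\infty)}\e^{(M-\epsilon)x}P_{L_t}(\dx)\le 8\,\mathscr{C}(t,M-\epsilon)$ extracted from Lemma \ref{lem:sec5:neu2}, produces $t$-exponential decay because $\alpha\ge\alpha^*(M)>\alpha^*(M-\epsilon)$. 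Near $t=0$ the naive pointwise bound diverges and one must exploit the Frullani cancellation: the elementary estimate $|1-\e^{-w}|\le|w|\e^{|w|}$ with $w=(\alpha-1)t+\beta x$ gives $|\e^{-t}-\e^{-\alpha t-\beta x}|\le\e^{-t}(|\alpha-1|t+|\beta|x)\exp(|\alpha-1|t+(M-\epsilon)x)$, and the offending $|\beta|x/t$ factor is then absorbed by the \lk-type identity $\int(\e^{\beta x}-1-\beta x)P_{L_t}(\dx)=O(t)$ valid under \EM. Analyticity in $\beta$ then follows from Morera's theorem, by the same Fubini--Cauchy argument used in Lemma \ref{lem:sec5:neu1}.

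Both sides of \eqref{WH-sup} are now analytic on the open connected half-plane $\{\Re\beta>-M\}$ and, by the classical (real-$\beta$) Wiener--Hopf factorization, coincide on the ray $[0,\infty)$; the identity principle forces them to agree throughout, establishing \eqref{WH-sup} together with the analytic extension \eqref{kappa}. The main obstacle I foresee is the analysis of the Frullani integral for $t$ near $0$: crude triangle-inequality estimates diverge there, and the argument hinges on coupling the Frullani cancellation with quantitative control of the rate at which the measures $P_{L_t}$ concentrate at the origin under \EM.
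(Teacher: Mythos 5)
Your overall architecture coincides with the paper's: start from the Wiener--Hopf identity for $\Re\beta\ge 0$ (Theorem 6.16 in Kyprianou's book), extend both sides of \eqref{WH-sup} analytically in $\beta$ to the half-plane $\{\Re\beta>-M\}$ --- the left-hand side via Lemmas \ref{lem:sec5:neu1} and \ref{lem:sec3:3}, the right-hand side via convergence and analyticity of the Frullani double integral in \eqref{kappa} --- and conclude with the identity theorem; the infimum case is symmetric. You also correctly locate the only real difficulty, namely the behaviour of the double integral as $t\to0$. The problem is that the mechanism you propose there does not work. After your pointwise bound, the offending contribution is
$\frac{|\beta|}{t}\int_{(0,\infty)}x\,\e^{(M-\epsilon)x}P_{L_t}(\dx)$,
a \emph{one-sided, exponentially weighted first moment}. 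The identity $\int_{\R}(\e^{\beta x}-1-\beta x)P_{L_t}(\dx)=\e^{t\kappa(\beta)}-1-\beta t\,E[L_1]=O(t)$ cannot absorb it: that rate rests on the exact exponential-moment formula, i.e.\ on cancellation between the contributions of $x>0$ and $x<0$ over the whole line, and says nothing about the restriction to $(0,\infty)$. Moreover the claimed rate is simply false for the quantity you actually need: already for Brownian motion $\int_{(0,\infty)}x\,P_{L_t}(\dx)=E[L_t^+]=\tfrac12\sqrt{2t/\pi}$, so the one-sided first moment is only $O(t^{1/2})$ whenever $c>0$ (and similarly for infinite-variation pure-jump processes). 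As written, the step ``the offending $|\beta|x/t$ factor is absorbed by the \lk-type identity'' fails.

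The gap is repairable, and the repair is precisely the content of the paper's Lemma \ref{lem-neu-phi1}, which supplies the analyticity of $\beta\mapsto\overline\kappa(q,\beta)$ that your argument needs. Rather than a single pointwise bound, split the inner integral at a level $1/k$ depending on the compact set of $\beta$'s: on $(0,1/k]$ use $|\e^{izx}-1|\le|z|\,|x|$ together with the small-time estimate $E\big[|L_t|1_{\{|L_t|\le 1/k\}}\big]\le C\,t^{1/2}$ (Lemma 30.3 in Sato, 1999); on $(1/k,\infty)$ use $P(|L_t|>1/k)\le C\,t$ for the bounded part of the integrand and H\"older's inequality to get $E\big[\e^{M'L_t}1_{\{|L_t|>1/k\}}\big]\le C\,t^{\epsilon}\e^{(1-\epsilon)\kappa(M)t}$ with $M'=M(1-\epsilon)$ for the exponential part. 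Each resulting rate $t^{1/2}$, $t$, $t^{\epsilon}$ is integrable against $\e^{-qt}/t$ near zero, and the factor $\e^{(1-\epsilon)\kappa(M)t}$ is harmless for large $t$ since $q>\alpha^*(M)\ge\kappa(M)_+$. In short, what is needed near $t=0$ is a one-sided $O(t^{\delta})$ bound with some $\delta>0$, obtained from Sato's small-time moment and tail estimates, not the two-sided $O(t)$ \lk cancellation. With that substitution your proof goes through and is essentially the paper's.
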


\begin{remark}
Note that the Wiener--Hopf factors $\varphi^+_q$ and $\varphi^-_q$ are related
to the Laplace exponents of the ascending and descending ladder process
$\overline\kappa$ and $\underline\kappa$ via
\begin{align}
\varphi^+_q(i\beta) = \frac{\overline\kappa(q,0)}{\overline\kappa(q,\beta)}
 \quad \text{ and } \quad
\varphi^-_q(-i\beta) = \frac{\underline\kappa(q,0)}{\underline\kappa(q,\beta)}.
\end{align}
\end{remark}

We will prepare the proof of this theorem with an intermediate lemma. Let us
denote the positive part by $a_+:=\max\{a,0\}$.

\begin{lemma}\label{lem-neu-phi1}
Let $L$ be a \lev process that satisfies assumption $(\mathbb{EM})$.
For $q>\kappa(M)_+$ the maps
\begin{align}\label{an-map}
z\mapsto \int_0^\infty \int_{(0,\infty)}
         \big(1 - \e^{izx} \big) P_{L_t}(\dx) \frac{\e^{-qt}}{t} \dt
\end{align}
and
\begin{align}
z\mapsto \int_0^\infty \int_{(0,\infty)}
         \big(\e^{-t} - \e^{-qt+izx} \big) P_{L_t}(\dx) \frac{1}{t} \dt
\end{align}
are well defined and analytic in the open half plane
$\big\{z\in\C: \Im(z)>-M\big \}$.
\end{lemma}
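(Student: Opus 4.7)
The plan is to reduce both claims to absolute integrability of the corresponding integrand uniformly on compact subsets of the half-plane $\{z\in\C:\Im z>-M\}$; analyticity then follows by combining Fubini's theorem with Morera's theorem, exactly as in the proof of Lemma \ref{lem:sec5:neu1}. Fix such a compact set $K$, and set $a_K:=\sup_{z\in K}(-\Im z)_+$, which satisfies $a_K<M$.

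For the first map, split the $t$-integral at $1$. On $[1,\infty)$ the rough bound $|1-\e^{izx}|\le 1+\e^{Mx}$ (valid for $x>0$ and $\Im z\ge-M$), together with convexity of $\kappa$ on $[0,M]$, yields $\int_{(0,\infty)}|1-\e^{izx}|P_{L_t}(\dx)\le 2\e^{t\kappa(M)_+}$, and the assumption $q>\kappa(M)_+$ makes the tail $\int_1^\infty \e^{-(q-\kappa(M)_+)t}/t\,\dt$ finite. The delicate piece is $(0,1]$, where the $1/t$ singularity at $0$ has to be absorbed. From $1-\e^{izx}=-iz\int_0^x\e^{izy}\dy$ one obtains the sharper estimate $|1-\e^{izx}|\le|z|x\e^{(-\Im z)_+ x}$ for $x>0$, whence
\begin{align*}
\int_{(0,\infty)}|1-\e^{izx}|P_{L_t}(\dx)\le C_K\,E[L_t^+\e^{a_K L_t}].
\end{align*}
The right-hand side is controlled by an Esscher change of measure $\ud\tilde P/\ud P=\e^{a_K L_T-T\kappa(a_K)}$ (well defined since $a_K<M$): under $\tilde P$, the process $L$ is again a L\'evy process whose local characteristics still satisfy \EM, and $\tilde E[L_t^2]=t\kappa''(a_K)+t^2\kappa'(a_K)^2=O(t)$. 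Cauchy--Schwarz then gives $E[L_t^+\e^{a_K L_t}]=\e^{t\kappa(a_K)}\tilde E[L_t^+]\le\e^{t\kappa(a_K)}\sqrt{\tilde E[L_t^2]}=O(\sqrt t)$, so that the integrand is dominated by a constant times $t^{-1/2}$ near the origin, uniformly for $z\in K$.

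For the second map, decompose $\e^{-t}-\e^{-qt+izx}=(\e^{-t}-\e^{-qt})+\e^{-qt}(1-\e^{izx})$. Integration against $P_{L_t}(\dx)\dt/t$ splits the integral into the $z$-independent constant $\int_0^\infty (\e^{-t}-\e^{-qt})t^{-1} P(L_t>0)\,\dt$, finite because $(\e^{-t}-\e^{-qt})/t\to q-1$ at $0$ and both exponentials decay at infinity with $P(L_t>0)\le 1$, plus the first map, which was already shown to be analytic. Hence the second map is a sum of a constant and an analytic function, and is analytic as well. The main obstacle throughout is precisely the refined estimate $E[L_t^+\e^{a_K L_t}]=O(\sqrt t)$: without the Esscher upgrade, the pointwise bounds on $|1-\e^{izx}|$ yield only $O(1)$, which is not integrable against $1/t$ at the origin.
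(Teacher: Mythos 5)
Your proposal is correct and shares the paper's overall skeleton --- reduce everything to an absolute-integrability bound that is uniform over compact subsets $K$ of $\{\Im z>-M\}$, then get analyticity via Fubini, Cauchy and Morera, and handle the second map by peeling off the $z$-independent term $\int_0^\infty(\e^{-t}-\e^{-qt})t^{-1}P(L_t>0)\,\dt$. Where you genuinely diverge is in how the $1/t$ singularity at $t=0$ is absorbed. The paper splits the \emph{inner} integral in $x$ at the level $1/k$ (with $|z|<k$ on $K$): on $(0,1/k]$ it uses $|\e^{izx}-1|\le|zx|$ together with Sato's Lemma 30.3, inequality (30.13), to get $E[|L_t|1_{\{|L_t|\le1/k\}}]\le C_1(K)t^{1/2}$; on $(1/k,\infty)$ it uses inequality (30.10), $P(|L_t|>1/k)\le C_2(K)t$, and a H\"older argument to bound $E[\e^{M'L_t}1_{\{|L_t|>1/k\}}]\le C_3(K)t^{\epsilon}\e^{(1-\epsilon)\kappa(M)t}$. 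You instead split the \emph{outer} integral in $t$ at $1$, use the sharper pointwise bound $|1-\e^{izx}|\le|z|x\e^{(-\Im z)_+x}$ coming from the integral representation, and control $E[L_t^+\e^{a_KL_t}]=O(\sqrt t)$ by an Esscher change of measure plus Cauchy--Schwarz. Both routes produce the decisive $t^{1/2}$-type decay near the origin; yours trades the citation of Sato's small-time moment inequalities for the (standard but slightly heavier) Esscher machinery, and is self-contained in that respect, while the paper's version needs the three-way split precisely because the crude bound $|zx|$ is only used on a bounded $x$-range. One small point worth making explicit in your write-up: the second moment $\tilde E[L_t^2]=t\kappa''(a_K)+t^2\kappa'(a_K)^2$ is finite because $a_K<M$ lies in the interior of the interval where $(\mathbb{EM})$ guarantees exponential moments, so $\kappa$ is twice differentiable there; with that remark included, the argument is complete and correct.
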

\begin{proof}
We will show that for every compact subset $K\subset\{z\in\C:\,\Im(z)>-M\}$,
there is a constant $C=C(K)>0$ such that
\begin{align}\label{Anker}
\int_0^\infty \int_{(0,\infty)}\big| \e^{izx}-1\big| P_{L_t}(\dx) \frac{\e^{-qt}}{t} \dt <C(K),
\end{align}
for every $z\in K$. Then, applying Lebesgue's dominated convergence theorem
yields the continuity of the function
\begin{align*}
z\mapsto \int_0^\infty\int_{(0,\infty)} \big( \e^{izx} - 1\big) P_{L_t}(\dx) \frac{\e^{-qt}}{t} \dt
\end{align*}
inside the half-plane $\{z\in\C:\, \Im(z)>-M\}$. Moreover, let $\gamma$ be
an arbitrary triangle inside $\{z\in\C:\,\Im(z)>-M\}$; the theorems of Fubini
and Cauchy yield
\begin{multline}
\int_{\partial\gamma}\int_0^\infty\int_{(0,\infty)}
          \big(\e^{izx}-1\big) P_{L_t}(\dx) \frac{\e^{-qt}}{t}\dt\,\dz\\
  = \int_0^\infty\int_{(0,\infty)}\int_{\partial\gamma} \big(\e^{izx}-1\big)\dz\, P_{L_t}(\dx)\frac{\e^{-qt}}{t}\dt
  = 0 \,.
\end{multline}
Hence, applying Morera's theorem yields the analyticity of \eqref{an-map}
in the open half-plane $\{z\in\C:\, \Im(z)>-M\}$.

The assertion for the second map immediately follows from the identity
\begin{align*}
\big(\e^{-t} - \e^{-qt+izx}\big) t^{-1}
 = \big(1- \e^{izx} \big)\e^{-qt}  t^{-1} + \big(\e^{-t} - \e^{-qt}\big) t^{-1}
\end{align*}
and the integrability of the second part, since
\begin{align*}
\int_\epsilon^\infty\big|\e^{-t} - \e^{-qt}\big| t^{-1} \dt <\infty
\end{align*}
and
\begin{align*}
\int_0^\epsilon\big|\e^{-t} - \e^{-qt}\big| t^{-1} \dt
 = \int_0^\epsilon\big|\e^{t(q-1)} - 1\big| \e^{-qt}t^{-1} \dt
 \le C |q-1|\int_0^\epsilon \e^{-qt}\dt
 <\infty,
\end{align*}
with $C>1$, for $\epsilon>0$ small enough.

To show estimation \eqref{Anker}, we choose a constant $k=k(K)>0$ only
depending on the compact set $K$, such that $|z|<k$ for every $z\in K$, and we
write
\begin{align}\label{hilflemma59}
\lefteqn{\int_{(0,\infty)} \big| \e^{izx} - 1\big| P_{L_t}(\dx)}\nonumber\\
 &= \int_{(0,1/k]} \big| \e^{izx} - 1\big| P_{L_t}(\dx)
  + \int_{(1/k,\infty)} \big| \e^{izx} - 1\big| P_{L_t}(\dx)\nonumber\\
 &\le \int_{(0,1/k]} |zx| P_{L_t}(\dx) + \int_{(1/k,\infty)} \big| \e^{izx}\big| P_{L_t}(\dx)
  + \int_{(1/k,\infty)}P_{L_t}(\dx)\,.
\end{align}
Using inequality (30.13) of Lemma 30.3 in \citeN{Sato99} we can deduce
\begin{align*}
\int_{(0,1/k]} |zx| P_{L_t}(\dx)
\le k  \int_{(0,1/k]} |x| P_{L_t}(\dx)
\le kE \big[ |L_t| 1_{\{|L_t| \le 1/k\}}\big]
\le C_1(K) t^{1/2}\,,
\end{align*}
with a constant $C_1(K)$ that depends only on the compact set $K$.
Similarly, using inequality (30.10) in \citeN{Sato99}, we can
estimate the last term of \eqref{hilflemma59}
\begin{align*}
\int_{(1/k,\infty)}P_{L_t}(\dx)
 = P\big(\{ L_t > 1/k \} \big) \le P\big(\{ |L_t| > 1/k \} \big)
 \le C_2(K) t \,,
\end{align*}
with a constant $C_2(K)$ that depends only on the compact set $K$.
In order to estimate the second term of inequality \eqref{hilflemma59},
let us note that we may choose $\epsilon>0$ small enough, such that
for every $z\in K$, we have $-\Im(z)<M'<M$ with $M':=M(1-\epsilon)$,
and we get
\begin{align*}
 \int_{(1/k,\infty)} \big|\e^{izx}\big| P_{L_t}(\dx)
\le  E\big[\e^{M'L_t} 1_{\{|L_t| >1/k\}} \big] .
\end{align*}
Applying H\"older's inequality with $p:=\frac{1}{1-\epsilon}$ and
$q:=\frac{1}{\epsilon}$, together with Lemma 30.3 in \citeN{Sato99},
yields
\begin{align*}
E\big[\e^{M'L_t} 1_{\{|L_t| >1/k\}} \big]
 &\le \Big( E\big[\e^{p M'L_t}\big] \Big)^{1/p} \Big( P\big(\{|L_t| >1/k\}\big) \Big)^{1/q} \\
 &\le C_3(K) t^\epsilon\e^{(1-\epsilon) \kappa(M) t}\,.
\end{align*}

Altogether we have
\begin{align*}
\int_{(0,\infty)} \big|\e^{izx} - 1\big| P_{L_t}(\dx)
 \le C_1(K) t^{1/2} + C_2(K) t + C_3(K) t^\epsilon \e^{(1-\epsilon) \kappa(M) t},
\end{align*}
with positive constants $C_1(K),\, C_2(K)$ and $C_3(K)$ that only depend on the
compact set $K$. As $q>(1-\epsilon)(\kappa(M))_+$, we can conclude \eqref{Anker},
which completes the proof.
\end{proof}

\begin{proof}[Proof of Theorem \ref{thm:wh}]
For $\beta\in\C$ with $\Re\beta\ge0$ the assertion follows directly from
Theorem 6.16 (ii) and (iii) in \citeN{Kyprianou06}.

From Lemma \ref{lem:sec3:3} we know that for $q>\alpha^*(M)$ the function
\begin{align*}
\beta\mapsto \varphi_{\supL_{\theta}}(i\beta) = E\big[\e^{-\beta \supL_{\theta}}\big]
\end{align*}
has an analytic extension to the half-plane
\begin{align*}
\{\beta \in\C: \Re(\beta)> -M\},
\end{align*}
whereas Lemma \ref{lem-neu-phi1} yields that if $q>\alpha^*(M)$, the mapping
\begin{align*}
\beta\mapsto \frac{\overline\kappa(q,0)}{\overline\kappa(q,\beta)}
\end{align*}
has an analytic extension to the half-plane
\begin{align*}
\{\beta \in\C: \Re(\beta)> -M\},
\end{align*}
while identity \eqref{kappa} still holds for this extension.
The identity theorem for holomorphic functions yields that equation
\eqref{WH-sup} holds for every $\{\beta \in\C: \Re(\beta)> -M\}$
if $q>\alpha^*(M)$. The proof for equations \eqref{WH-inf} and
\eqref{kappa-hat} follows along the same lines.
\end{proof}

\begin{remark}
Note that, by analogous arguments, we can prove that the Laplace exponent
of the ascending, resp. descending, ladder process $\overline\kappa(\alpha,\beta)$,
resp. $\underline\kappa(\alpha,\beta)$, has an analytic extension to
$\alpha\in \{\alpha\in\C: \Re(\alpha)>\alpha^*(M)\}$, which is given by
\eqref{kappa}, resp. \eqref{kappa-hat}.
\end{remark}

\subsection{Inversion}
The next step is to invert the Laplace transform in the \wh factorization in
order to recover the characteristic function of $\supL_t$, at a \textit{fixed}
time $t$. Let us mention that although the Wiener--Hopf factorization and the
characteristic function of $\supL_\theta$ are discussed in several textbooks,
the extended characteristic function of $\supL_t$ at a fixed time has not been
studied in the literature before.

The main result is Theorem \ref{inv:wh}, which will make use of the following
auxiliary lemma.

\begin{lemma}\label{map}
The maps $t\mapsto E\big[\e^{-\beta\supL_t}\big]$ and $t\mapsto
E\big[\e^{\beta\infL_t}\big]$ are continuous for all $\beta\in\C$
with $\Re\beta\in[-M,\infty)$.
\end{lemma}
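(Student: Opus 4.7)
The plan is to apply the dominated convergence theorem in the time variable. Fix $t_0\in[0,T]$, choose $T'\in[t_0,T]$ slightly larger than $t_0$, and consider an arbitrary sequence $t_n\to t_0$ that eventually lies in $[0,T']$. Two ingredients are needed: almost sure convergence $\e^{-\beta\supL_{t_n}}\to\e^{-\beta\supL_{t_0}}$, and an integrable majorant valid uniformly in $n$.

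For the pointwise convergence I would invoke the stochastic continuity of \lev processes, which gives $P(L_{t_0-}=L_{t_0})=1$. On this full-measure event, the non-decreasing càdlàg map $s\mapsto\supL_s$ is continuous at $t_0$: right-continuity at $t_0$ is inherited from the right-continuity of $L$, while the estimate $\supL_{t_0-}=\sup_{s<t_0}L_s\ge L_{t_0-}=L_{t_0}$ forces $\supL_{t_0}=\max(\supL_{t_0-},L_{t_0})=\supL_{t_0-}$. Hence $\supL_{t_n}\to\supL_{t_0}$ almost surely, and therefore the same holds for $\e^{-\beta\supL_{t_n}}$.

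For the dominating function I would use the monotonicity of $\supL$ together with the assumption $\Re(\beta)\ge -M$. Since $\supL_{t_n}\le\supL_{T'}$ and $\supL_{T'}\ge 0$, one has
\begin{align*}
\bigl|\e^{-\beta\supL_{t_n}}\bigr|=\e^{-\Re(\beta)\supL_{t_n}}\le 1+\e^{M\supL_{T'}},
\end{align*}
and Lemma \ref{lem:sec5:neu2} ensures $E[\e^{M\supL_{T'}}]<\infty$. The dominated convergence theorem then yields continuity of $t\mapsto E[\e^{-\beta\supL_t}]$.

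For the second map, I would invoke the duality of Remark \ref{dual-Levy}: the process $L':=-L$ is again a \lev process satisfying \EM with the same constant $M$, and $\infL_t=-\overline{L'}_t$. Hence $E[\e^{\beta\infL_t}]=E[\e^{-\beta\overline{L'}_t}]$, and continuity in $t$ is immediate from the first part applied to $L'$. The only non-routine step in the argument is the almost sure continuity of $\supL$ at a fixed $t_0$, which is ultimately a restatement of the stochastic continuity of \lev processes at fixed times; everything else is a bookkeeping exercise with Lemma \ref{lem:sec5:neu2}.
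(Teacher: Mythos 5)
Your proof is correct and follows essentially the same route as the paper: almost sure convergence of $\supL_s\to\supL_{t_0}$ from right-continuity, monotonicity and the absence of fixed-time jumps, followed by dominated convergence using the exponential moment bound of Lemma \ref{lem:sec5:neu2}. Your version is in fact slightly more careful on two minor points --- you spell out why $\supL$ is a.s.\ continuous at $t_0$ and you dominate by $\e^{M\supL_{T'}}$ with $T'>t_0$ so that approach from the right is covered, whereas the paper treats the infimum directly instead of via the duality of Remark \ref{dual-Levy} --- but these are cosmetic differences.
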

\begin{proof}
Since the \lev process $L$ is right continuous, stochastically continuous
and $\supL$ is an increasing process, we get that $\supL_s\nearrow\supL_t$ a.s.
as $s\rightarrow t$.

As $\supL_s \ge0$ we have
\begin{align*}
\big|\e^{-\beta \supL_s}\big|
 = \e^{-\Re(\beta) \supL_s}
 \le \e^{M\supL_s}
 \le \e^{M\supL_t},
\end{align*}
and we may apply the dominated convergence theorem to get
\begin{align*}
E\big[\e^{-\beta \supL_s}\big]
 \rightarrow E\big[\e^{-\beta \supL_t}\big] \qquad \text{as }s\rightarrow t,
\end{align*}
for every $\beta\in\C$ with $\Re(\beta)\ge -M$.
Analogously, taking into account that
$\big|\e^{\beta\infL_s}\big|\le\e^{-M\infL_s}$ for $\Re\beta\ge-M$,
the dominated convergence theorem yields the continuity of the second
map.
\end{proof}

\begin{theorem}\label{inv:wh}
Let $L$ be a \lev process that satisfies assumption $(\mathbb{EM})$ (and
is not a compound Poisson process). The  Laplace transform of
$\supL_t$ and $\infL_t$ at a fixed time $t$, $t\in[0,T]$, is given by
\begin{align}
E\big[\e^{-\beta\supL_t}\big]
 &= \lim_{A\to\infty} \frac{1}{2\pi} \int_{-A}^A
    \frac{\e^{t(Y+iv)}}{Y+iv} \frac{\overline\kappa(Y+iv,0)}{\overline\kappa(Y+iv,\beta)}\ud v,
\end{align}
and
\begin{align}
E\big[\e^{\beta\infL_t}\big]
 &= \lim_{A\to\infty} \frac{1}{2\pi} \int_{-A}^A
    \frac{\e^{t(\widetilde{Y}+iv)}}{\widetilde{Y}+iv}
    \frac{\underline\kappa(\widetilde{Y}+iv,0)}{\underline\kappa(\widetilde{Y}+iv,-\beta)}\ud v,
\end{align}
for $\beta\in\C$ with $\Re\beta\in(-M,\infty)$
and $Y,\widetilde{Y}>\alpha^*(M)$.
\end{theorem}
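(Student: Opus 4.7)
The plan is to apply the classical complex (Bromwich) Laplace inversion formula to the identity \eqref{WH-sup} of Theorem \ref{thm:wh}. Fix $\beta\in\C$ with $\Re\beta>-M$ and set $f_\beta(t):=E\big[\e^{-\beta\supL_t}\big]$. By \eqref{WH-sup} together with the remark following Theorem \ref{thm:wh} (which upgrades the Laplace transform identity from $\alpha\ge\alpha^*(M)$ to $\alpha\in\C$ with $\Re\alpha>\alpha^*(M)$), the Laplace transform
$$F_\beta(q) := \int_0^\infty \e^{-qt} f_\beta(t)\,\dt = \frac{1}{q}\cdot\frac{\overline\kappa(q,0)}{\overline\kappa(q,\beta)}$$
is well defined and analytic on $\{q\in\C:\Re q>\alpha^*(M)\}$.

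Next I would assemble the two ingredients needed to invoke Bromwich's theorem. Continuity of $f_\beta$ in $t$ is Lemma \ref{map}, while the uniform bound
$$|f_\beta(t)| \le E\big[\e^{-\Re\beta\,\supL_t}\big] \le E\big[\e^{M\supL_t}\big] \le 8\mathscr{C}(t,M),$$
obtained from $\supL_t\ge0$ together with Lemma \ref{lem:sec5:neu2}, shows that $f_\beta$ is of exponential type bounded by $\alpha^*(M)$, so that the abscissa of convergence of $F_\beta$ does not exceed $\alpha^*(M)$. The Bromwich inversion formula then yields, for every $Y>\alpha^*(M)$,
$$f_\beta(t) = \frac{1}{2\pi i}\lim_{A\to\infty}\int_{Y-iA}^{Y+iA} \e^{qt}\,F_\beta(q)\,\ud q,$$
and the parametrization $q=Y+iv$ produces exactly the claimed representation. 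The statement for $\infL_t$ follows by repeating the argument with \eqref{WH-inf} and \eqref{kappa-hat} in place of \eqref{WH-sup} and \eqref{kappa}.

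The technical obstacle is justifying the pointwise validity of the symmetric principal-value limit on the right-hand side. When $\beta\in[0,\infty)$ is real, $t\mapsto\e^{-\beta\supL_t}$ is pathwise decreasing because $\supL$ is increasing, so $f_\beta$ is monotone and in particular of locally bounded variation; Jordan's criterion then delivers the pointwise inversion immediately. For general $\beta\in\C$ with $\Re\beta>-M$ I would either (i) decompose $f_\beta$ into real and imaginary parts, represent each as a Riemann--Stieltjes integral against the law $P_{\supL_t}$ (with the factor $\e^{|\Re\beta|\supL_t}$ already accommodated by the contour shift $Y>\alpha^*(M)$), and invoke the inversion component-wise on bounded-variation pieces; or (ii) exploit the integral representation \eqref{kappa} of $\overline\kappa(Y+iv,\beta)$, together with the decay estimates in the proof of Lemma \ref{lem-neu-phi1}, to show that $|F_\beta(Y+iv)|$ decays fast enough in $|v|$ to make the contour integral absolutely convergent, so that the symmetric limit is a genuine improper integral. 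Either route closes the argument.
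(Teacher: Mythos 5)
You have reproduced the paper's strategy: start from the Wiener--Hopf identity \eqref{WH-sup} of Theorem \ref{thm:wh}, take continuity of $t\mapsto E\big[\e^{-\beta\supL_t}\big]$ from Lemma \ref{map} and the exponential growth bound from Lemma \ref{lem:sec5:neu2}, and invoke the complex (Bromwich) inversion theorem --- the paper uses Satz 4.4.3 of Doetsch, whose hypotheses are exactly absolute convergence of the Laplace integral, continuity, and local bounded variation of the original function. You have also correctly isolated the one genuinely delicate point, namely the pointwise validity of the principal-value limit. The gap is that neither of your two proposed ways of settling it works as stated. Route (ii) is closed off for a structural reason: since $f_\beta(0+)=1\neq0$, the transform $F_\beta(q)=q^{-1}\,\overline\kappa(q,0)/\overline\kappa(q,\beta)$ decays only like $1/|q|$ on the line $\Re q=Y$ (indeed $\overline\kappa(q,0)/\overline\kappa(q,\beta)=E\big[\e^{-\beta\supL_{\theta}}\big]\to1$ as $q\to\infty$, because $\supL_\theta\to0$), so $v\mapsto F_\beta(Y+iv)$ is \emph{not} absolutely integrable and the symmetric limit in the statement is an honest principal value; no decay estimate on $\overline\kappa$ can change this.

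Route (i) is the right one, but the crux --- showing that $t\mapsto\Re f_\beta(t)$ and $t\mapsto\Im f_\beta(t)$ are locally of bounded variation \emph{in $t$} --- is precisely what your sketch leaves unproved: representing $f_\beta(t)$ as an integral against $P_{\supL_t}$ gives no control on the variation in $t$, because the $t$-dependence sits entirely in the measure. The paper's device is a monotone decomposition: writing, say, $\Im\big(\e^{-\beta x}\big)=\e^{-\Re(\beta)x}\sin\big(-\Im(\beta)x\big)$ and splitting $\sin=f-g$ with $f,g$ increasing, $f(0)=g(0)=0$ and $|f(x)|,|g(x)|\le|x|$, one sees that (for the relevant sign configuration of $\Re\beta$ and $\Im\beta$; the other cases are symmetric) each of
\begin{align*}
t\mapsto E\big[f\big(-\Im(\beta)\supL_t\big)\e^{-\Re(\beta)\supL_t}\big]
\quad\text{and}\quad
t\mapsto E\big[g\big(-\Im(\beta)\supL_t\big)\e^{-\Re(\beta)\supL_t}\big]
\end{align*}
is a finite nondecreasing function of $t$, because $\supL$ is pathwise nondecreasing, the integrands are nonnegative nondecreasing functions of $\supL_t$, and they are dominated by a constant times $\e^{M\supL_t}$, which is integrable by Lemma \ref{lem:sec5:neu2}. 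A difference of finite monotone functions is of bounded variation, and the inversion theorem then applies componentwise. This is the concrete step your proposal needs to supply; with it (and with your correct observation that the case of real $\beta\ge0$ is immediate by monotonicity), the argument coincides with the paper's.
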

\begin{proof}
Theorem \ref{thm:wh}, together with equation \eqref{WH-sup},
immediately yield
\begin{align}
\int_0^\infty \e^{-qt}E\big[\e^{-\beta \supL_t}\big]\dt
= \frac{1}{q}\frac{\overline\kappa(q,0)}{\overline\kappa(q,\beta)},
\end{align}
for $\beta\in\C$ with $\Re(\beta)>-M$ and $q>\alpha^*(M)$.

In order to deduce that we can invert this Laplace transform, we want to
verify the assumptions of Satz 4.4.3 in \citeN{Doetsch50} for the real
and imaginary part of $t\mapsto E\big[\e^{-\beta\supL_t}\big]$.
From the proof of Lemma \ref{lem:sec3:3} we get that
\begin{align*}
\int_0^\infty\e^{-qt}\Big|E\big[\e^{-\beta \supL_t}\big]\Big|\dt
 \le \int_0^\infty\e^{-qt} E\big[\e^{-\Re(\beta)\supL_t}\big]\dt<\infty;
\end{align*}
this yields the required integrability, i.e. absolute convergence, of
\begin{align*}
 \int_0^\infty\e^{-qt} \Big|\Im\big(E\big[\e^{\beta\supL_t}\big]\big)\big|\dt
\quad\text{ and }\quad
 \int_0^\infty\e^{-qt} \Big|\Re\big(E\big[\e^{\beta\supL_t}\big]\big)\big|\dt,
\end{align*}
for $q>\alpha^*(M)$.
Further the real and imaginary part of
$t\mapsto E\big[\e^{-\beta \supL_t}\big]$
are of bounded variation for $\beta\in\C$ with
$\Re\beta\in(-M,\infty)$.

Let us verify this assertion for the imaginary part, for
$-M<\Re(\beta)\le0$ and $\Im(\beta) \le 0$. We have that
\begin{align*}
\Im\left(E\big[\e^{-\beta \supL_t}\big]\right)
 = i E\big[\sin\big(-\Im(\beta)\supL_t\big)\e^{-\Re(\beta) \supL_t}\big].
\end{align*}
We can decompose $\sin(x)=f(x)-g(x)$, where $f$ and $g$ are increasing functions
with $f(0)=g(0)=0$, and $|f(x)|\le x$ and $|g(x)|\le x$. It follows that
\begin{align*}
\sin\big(-\Im(\beta)\supL_t\big)\e^{-\Re(\beta) \supL_t}
 = f\big(-\Im(\beta)\supL_t\big)\e^{-\Re(\beta) \supL_t}
 - g\big(-\Im(\beta)\supL_t\big)\e^{-\Re(\beta) \supL_t},
\end{align*}
where both terms are increasing in time and are integrable, since
\begin{align*}
E\Big[\big|h\big(-\Im(\beta)\supL_t\big)\e^{-\Re(\beta) \supL_t}\big|\Big]
 &\le \big| \Im(\beta) \big|
      E\Big[ \big|\supL_t\big| \e^{-\Re(\beta) \supL_t} \Big]\\
 &\le \text{const}\cdot E\big[ \e^{M \supL_t} \big]
  < \infty,
\end{align*}
for $h=g$ and $h=f$. The assertion for the other parts follows similarly.

Now, using the continuity of the map $t\mapsto E\big[\e^{-\beta\supL_t}\big]$,
cf. Lemma \ref{map}, we may apply Satz 4.4.3 in \citeN{Doetsch50}, to invert
this Laplace transform; that is, to conclude that
\begin{align}
E\big[\e^{-\beta\supL_t}\big]
&= \text{(p.v.)}\, \frac{1}{2\pi i}\int_{Y-i\infty}^{Y+i\infty}
     \frac{\e^{tz}}{z} \frac{\overline\kappa(z,0)}{\overline\kappa(z,\beta)}\ud z \nonumber\\
&= \lim_{A\to\infty} \frac{1}{2\pi} \int_{-A}^A
     \frac{\e^{t(Y+iv)}}{Y+iv} \frac{\overline\kappa(Y+iv,0)}{\overline\kappa(Y+iv,\beta)}\ud v,
\end{align}
for all $\beta\in\C$ with $\Re\beta\in(-M,\infty)$ and for every
$Y>\alpha^*(M)$. The proof for the infimum follows along the same lines.
\end{proof}

\section{\lev processes: examples and  properties}
\label{Levy-ex-prop}

We first state some conditions for the continuity of the law of a \lev process,
and the continuity of the law of the supremum of a \lev process. Then, we describe
the most popular \lev models for financial applications, and comment on their
path and moment properties which are relevant for the application of Fourier
transform valuation formulas.

\subsection{Continuity properties}
The valuation theorem for discontinuous payoff functions (Theorem 2.7 in EGP),
and the analysis of the properties of discontinuous payoff functions (Examples
5.2, 5.3 and 5.4 in EGP), show that if the measure of the underlying random
variable does not have atoms, then the valuation formula is valid as a pointwise limit.
Thus, we present sufficient conditions for the continuity of the law of a \lev
process and its supremum, and discuss these conditions for certain popular
examples.

\begin{statement}\label{cont-Levy}
Let $L$ be a \lev process  with triplet ($b,c,\lambda$). Then, Theorem
27.4 in \citeN{Sato99} yields that the law $P_{L_t}$, $t\in[0,T]$, is
\textit{atomless} iff $L$ is a process of \textit{infinite variation}
or \textit{infinite activity}. In other words, if one of the following
conditions holds true:
\begin{description}
\item[(a)] $c\neq0$ or $\int_{\{|x|\le1\}}|x|\lambda(\dx)=\infty$;
\item[(b)] $c=0$, $\lambda(\R)=\infty$ and $\int_{\{|x|\le1\}}|x|\lambda(\dx)<\infty$.
\end{description}
\end{statement}

\begin{statement}\label{cont-sup-Levy}
Let $L$ be a \lev process and assume that
\begin{description}
\item[(a)] $L$ has \textit{infinite variation}, or
\item[(b)] $L$ has \textit{infinite activity} and is \emph{regular upwards}.
           Regular upwards means that
           $P(\tau_0=0)=1$ where $\tau_0:=\inf\{t>0: L_t(\omega)>0\}$.
\end{description}
Then, Lemma 49.3 in \citeN{Sato99} yields that $\overline{L}_t$ has a
\textit{continuous} distribution for every $t\in[0,T]$. The statement
for the infimum of a \lev process is analogous.
\end{statement}

\subsection{Examples}
Next, we describe the most popular \lev processes for applications in
mathematical finance, namely the generalized hyperbolic (GH) process,
the CGMY process and the Meixner process. We present their characteristic
functions, which are essential for the application of Fourier transform
methods for option pricing, and its domain of definition. We also discuss
their path properties which are relevant for option pricing. For an
interesting survey on the path properties of \lev processes we refer to
Kyprianou and Loeffen \citeyear{KyprianouLoeffen05}.

\begin{example}[GH model]
Let $H=(H_t)_{0\le t\le T}$ be a generalized hyperbolic process with
$\mathcal L(H_1)= \mathrm{GH}(\lambda,\alpha,\beta,\delta,\mu)$, cf.
\citeN[p.~321]{Eberlein01a} or \citeN{EberleinPrause02}. The characteristic
function of $H_1$ is
\begin{align}
\varphi_{H_1}(u) = \e^{iu\mu}
  \bigg(\frac{\alpha^2-\beta^2}{\alpha^2-(\beta+iu)^2}\bigg)^{\frac{\lambda}{2}}
  \frac{K_{\lambda}\big(\delta\sqrt{\alpha^2-(\beta+iu)^2}\big)}
  {K_{\lambda}\big(\delta\sqrt{\alpha^2-\beta^2}\big)},
\end{align}
where $K_{\lambda}$ denotes the Bessel function of the third kind with
index $\lambda$ (cf. \citeNP{AbramowitzStegun68}); the moment generating
function exists for $u\in(-\alpha-\beta,\alpha-\beta)$. The sample paths
of a generalized hyperbolic \lev process have infinite variation. Thus,
by Statements \ref{cont-Levy} and \ref{cont-sup-Levy}, we can deduce that
the laws of both a GH \lev process and its supremum do not have atoms.

The class of generalized hyperbolic distributions is not closed
under convolution, hence the distribution of $H_t$ is no longer a
generalized hyperbolic one. Nevertheless, the characteristic function
of $\mathcal L(H_t)$ is given explicitly by
\begin{displaymath}
 \varphi_{H_t}(u) = \left(\varphi_{H_1}(u)\right)^t.
\end{displaymath}

A class closed under certain convolutions is the class of normal inverse Gaussian
distributions, where $\lambda=-\frac12$; cf. \citeN{Barndorff-Nielsen98}.
In that case, $\mathcal L(H_t)= \mathrm{NIG}(\alpha,\beta,\delta t,\mu t)$
and the characteristic function resumes the form
\begin{align}
\varphi_{H_t}(u) = e^{iu\mu t}
  \frac{\exp(\delta t\sqrt{\alpha^2-\beta^2})}{\exp(\delta t\sqrt{\alpha^2-(\beta+iu)^2})}.
\end{align}
Another interesting subclass is given by the hyperbolic
distributions which arise for $\lambda=1$; the hyperbolic model has
been introduced to finance by \citeN{EberleinKeller95}.
\end{example}

\begin{example}[CGMY model]
Let $H=(H_t)_{0\le t\le T}$ be a CGMY \lev process, cf.
\citeN{Carretal02}; another name for this process is (generalized)
tempered stable process (see e.g. \citeANP{ContTankov03} \citeyear{ContTankov03}). The \lev
measure of this process has the form
\begin{align*}
\lambda^{CGMY}(\dx) = C\frac{\e^{-Mx}}{x^{1+Y}}1_{\{x>0\}}\dx
           + C\frac{\e^{Gx}}{|x|^{1+Y}}1_{\{x<0\}}\dx,
\end{align*}
where the parameter space is $C,G,M>0$ and $Y\in(-\infty,2)$.
Moreover, the characteristic function of $H_t$, $t\in[0,T]$, is
\begin{align}
\varphi_{H_t}(u)
 = \exp\Big( t\,C\,\Gamma(-Y)\big[(M-iu)^Y+(G+iu)^Y-M^Y-G^Y\big] \Big),
\end{align}
for $Y\neq0$, and the moment generating function exists for
$u\in[-G,M]$.

The sample paths of the CGMY process have unbounded variation if
$Y\in[1,2)$, bounded variation if $Y\in(0,1)$, and are of compound
Poisson type if $Y<0$. Moreover, the CGMY process is regular upwards
if $Y>0$; cf. \citeN{KyprianouLoeffen05}. Hence, by Statements
\ref{cont-Levy} and \ref{cont-sup-Levy}, the laws of a CGMY \lev process,
and its supremum, do not have atoms if $Y\in(0,2)$.

The CGMY process contains the Variance Gamma process (cf. Madan and Seneta
\citeyear{MadanSeneta90}) as a subclass, for $Y=0$. The characteristic
function of $H_t$, $t\in[0,T]$, is
\begin{align}
\varphi_{H_t}(u)
 = \exp\bigg( t\,C\,\Big[-\log\Big(1-\frac{iu}M\Big)-\log\Big(1+\frac{iu}G\Big)\Big] \bigg),
\end{align}
and the moment generating function exists for $u\in[-G,M]$.
The paths of the variance gamma process have bounded variation, infinite
activity and are regular upwards. Thus, the laws of a VG \lev process
and its supremum do not have atoms.
\end{example}

\begin{example}[Meixner model]
Let $H=(H_t)_{0\le t\le T}$ be a Meixner process with
$\mathcal L(H_1)=\Meixner(\alpha,\beta,\delta)$, $\alpha>0$,
$-\pi<\beta<\pi$, $\delta>0$, cf. \citeN{SchoutensTeugels98} and
\citeN{Schoutens02}. The characteristic function of $H_t$, $t\in[0,T]$, is
\begin{align}
\varphi_{H_t}(u)
 = \left(\frac{\cos\frac\beta2}{\cosh\frac{\alpha u-i\beta}{2}}\right)^{2\delta t},
\end{align}
and the moment generating function exists for
$u\in\big(\frac{\beta-\pi}{\alpha},\frac{\beta+\pi}{\alpha}\big)$.
The paths of a Meixner process have infinite variation. Hence the laws of a
Meixner \lev process and its supremum do not have atoms.
\end{example}

\section{Applications in finance}
\label{ch3:sc5}

In this section, we derive valuation formulas for lookback options, one-touch
options and equity default swaps, in models driven by \lev processes. We combine
the results on the \wh factorization and the characteristic function of the
supremum of a \lev process from this paper, with the results on Fourier transform
valuation formulas derived in EGP. Note that the results presented in the sequel
are valid for all the examples discussed in section \ref{Levy-ex-prop}.

We model the price process of a financial asset $S=(S_t)_{0\le t\le T}$ as an
exponential \lev process, i.e. a stochastic process with representation
\begin{equation}\label{asset-price}
S_t = S_0 \e^{L_t},
 \qquad 0\le t\le T
\end{equation}
(shortly: $S=S_0\,\e^L$). Every \lev process $L$, subject to Assumption \EM,
has the canonical decomposition
\begin{equation}\label{asset-candec}
 L_t = bt + \sqrt{c}W_t + \int_0^t\int_{\R}x(\mu-\nu)\dsdx,
\end{equation}
where \prozess[W] denotes a $P$-standard Brownian motion and $\mu$ denotes the
random measure associated with the jumps of $L$; cf. Jacod and Shiryaev
\citeyear[Chapter II]{JacodShiryaev03}.

Let $\mathcal M(P)$ denote the class of martingales on the stochastic basis
$\mathscr B$. The martingale condition for an asset $S$ is
\begin{equation}\label{asset drift}
 S = S_0\,\e^L\in\mathcal{M}(P)
  \Leftrightarrow
 b + \frac{c}{2} + \int_\R(\e^x-1-x)\lambda(\dx) = 0;
\end{equation}
cf. \shortciteN{EberleinPapapantoleonShiryaev06} for the details. That is, throughout
the rest of this paper, we will assume that $P$ is a \emph{martingale measure} for $S$.

\subsection{Lookback options}
The results on the characteristic function of the supremum of a \lev process,
cf. section \ref{wh-section}, allow us to price lookback options in models
driven by \lev processes using Fourier methods. Excluded are only compound
Poisson processes. Assuming that the asset price evolves as an exponential
\lev process, a fixed strike lookback call option with payoff
\begin{align}\label{lb-pay}
(\overline S_T - K)^+=(S_0\e^{\supL_T}-K)^+
\end{align}
can be viewed as a call option where the driving process is the \textit{supremum}
of the underlying \lev processes $L$. Therefore, the price of a lookback call
option is provided by the following result.

\begin{theorem}\label{lb-value}
Let $L$ be a \lev process that satisfies Assumption \EM. The price of
a fixed strike lookback call option with payoff \eqref{lb-pay} is given
by
\begin{align}\label{ch3:lc}
\mathbb{C}_T(\overline S;K)
 = \frac{1}{2\pi}\int_\R S_0^{R-iu}
   \varphi_{\supL_T}(-u-iR)\frac{K^{1+iu-R}}{(iu-R)(1+iu-R)}\ud u,
\end{align}
where
\begin{align}\label{supL-cf}
\varphi_{\supL_T}(-u-iR)
 = \lim_{A\to\infty} \frac{1}{2\pi} \int_{-A}^A
   \frac{\e^{T(Y+iv)}}{Y+iv}
   \frac{\overline\kappa(Y+iv,0)}{\overline\kappa(Y+iv,iu-R)}\ud v,
\end{align}
for $R\in(1,M)$ and $Y>\alpha^*(M)$.
\end{theorem}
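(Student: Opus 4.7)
The plan is to recognize the lookback payoff as a plain-vanilla call payoff written on the (non-semimartingale) ``underlying'' $e^{\supL_T}$, and then to reduce the theorem to a direct application of the Fourier valuation formula from EGP combined with Theorem~\ref{inv:wh} of the present paper. Concretely, writing $(\overline S_T - K)^+ = (S_0 e^{\supL_T} - K)^+$ casts the problem in the exact form to which the EGP framework applies, with the random variable $\supL_T$ playing the role that $L_T$ plays in the vanilla call case.

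First, I would verify the hypotheses needed to invoke the EGP call-option pricing formula. That formula requires the characteristic function of the driving random variable to admit an analytic extension into a horizontal strip that contains the damping line $\Im z = -R$, and the dampened payoff (with damping parameter $R$) to be integrable together with the extended characteristic function. The analyticity in the half-plane $\{z \in \C : \Im z > -M\}$ (and continuity on its closure) is exactly Corollary~\ref{cor:sec5:neucor1}, and the needed exponential-moment bound $E[\e^{M\supL_T}] \le 8\mathscr{C}(T,M) < \infty$ is Lemma~\ref{lem:sec5:neu2}. The constraint $R > 1$ arises as usual from integrability of the dampened call payoff, while $R < M$ keeps us strictly inside the strip of analyticity; hence $R \in (1,M)$ is exactly the admissible range. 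Feeding $\varphi_{\supL_T}$ into the EGP formula then yields \eqref{ch3:lc} directly, the factor $S_0^{R-iu}$ coming from the usual scaling that reduces the strike from $K$ to $K/S_0$.

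Second, I would derive the explicit representation \eqref{supL-cf} for $\varphi_{\supL_T}(-u-iR)$ by applying Theorem~\ref{inv:wh}. Setting $\beta := iu - R$ one has $-\beta = -iu + R$, so $\e^{-\beta \supL_T} = \e^{i(-u-iR)\supL_T}$ and therefore $\varphi_{\supL_T}(-u-iR) = E[\e^{-\beta \supL_T}]$. Since $\Re(\beta) = -R \in (-M,0) \subset (-M,\infty)$, the hypothesis of Theorem~\ref{inv:wh} is met, and the inversion formula in that theorem gives precisely \eqref{supL-cf} with $Y > \alpha^*(M)$ arbitrary. Finally, as noted in the theorem, the excluded case is that of compound Poisson processes, which is precisely the exclusion inherited from Theorem~\ref{thm:wh} and Theorem~\ref{inv:wh}.

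The main conceptual obstacle here is not analytic but structural: one must observe that the EGP valuation machinery never uses the (semi)martingale structure of the driver, only the analytic extension of its characteristic function into a complex strip plus the companion integrability estimate. Once this is recognized, the non-Markovian object $\supL_T$ slots into the EGP framework exactly like $L_T$ does for a European call, and the rest is bookkeeping combining Corollary~\ref{cor:sec5:neucor1}, Lemma~\ref{lem:sec5:neu2}, and Theorem~\ref{inv:wh}.
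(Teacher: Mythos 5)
Your proposal is correct and follows essentially the same route as the paper: it reduces the lookback payoff to a vanilla call on $\supL_T$, verifies the EGP conditions via Corollary~\ref{cor:sec5:neucor1} and Lemma~\ref{lem:sec5:neu2} (giving $R<M$), uses the standard integrability of the dampened call payoff (giving $R>1$), and obtains \eqref{supL-cf} from Theorem~\ref{inv:wh} with $\beta=iu-R$. The only cosmetic difference is that the paper phrases the verification explicitly in terms of conditions (C1)--(C3) of Theorem 2.2 in EGP and cites Example 5.1 there for the Fourier transform of the payoff.
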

\begin{proof}
We aim at applying Theorem 2.2 in EGP, hence we must check if conditions
(C1)--(C3) (of EGP) are satisfied.
Assumption \EM, coupled with Corollary \ref{cor:sec5:neucor1}, yields
that $M_{\supL_T}(R)$ exists for $R\in(-\infty,M)$, hence condition (C2)
is satisfied. Now, the Fourier transform of the payoff function
$f(x)=(\e^x-K)^+$ is
\begin{align*}
\widehat{f}(u+iR) = \frac{K^{1+iu-R}}{(iu-R)(1+iu-R)},
\end{align*}
and conditions (C1) and (C3) are satisfied for $R\in(1,\infty)$; cf.
Example 5.1 in EGP. Further,
the extended characteristic function $\varphi_{\supL_T}$ of $\supL_T$ is
provided by Theorem \ref{inv:wh} and equals \eqref{supL-cf} for
$R\in(-\infty,M)$ and $Y>\alpha^*(M)$. Finally,  Theorem
2.2 in EGP delivers the asserted valuation formula \eqref{ch3:lc}.
\end{proof}

\begin{remark}
Completely analogous formulas can be derived for the fixed strike lookback
put option with payoff $(K-\underline S_T)^+$ using the results for the
infimum of a \lev process. Moreover, floating strike lookback options can
be treated by the same formulas making use of the duality relationships
proved in \citeN{EberleinPapapantoleon05} and
Eberlein, Papapantoleon, and Shiryaev
\citeyear{EberleinPapapantoleonShiryaev06}.
\end{remark}

\subsection{One-touch options}
Analogously, we can derive valuation formulas for one-touch options in assets
driven by \lev processes using Fourier transform methods; here, the exceptions
are compound Poisson processes and non-regular upwards, finite variation, \lev
processes. Assuming that the asset price evolves as an exponential \lev process,
a one-touch call option with payoff
\begin{align}\label{ot-pay}
1_{\big\{\overline S_T>B\big\}}
 = 1_{\big\{\supL_T>\log(\frac{B}{S_0})\big\}}
\end{align}
can be valued as a digital call option where the driving process is
the supremum of the underlying \lev process.

\begin{theorem}\label{ot-value}
Let $L$ be a \lev process with infinite variation, or a regular upwards process with
infinite activity, that satisfies Assumption \EM. The
price of a one-touch option with payoff \eqref{ot-pay} is given by
\begin{align}\label{ch3:ot}
\mathbb{DC}_T(\overline S;B)
 &= \lim_{A\rightarrow\infty}\frac{1}{2\pi}\int_{-A}^A
    S_0^{R+iu}\varphi_{\supL_T}(u-iR)\frac{B^{-R-iu}}{R+iu}\ud u\\ \nonumber
 &= P\big(\supL_T>\log(B/S_0)\big),
\end{align}
for $R\in(0,M)$ and $Y>\alpha^*(M)$, where
$\varphi_{\supL_T}$ is given by \eqref{supL-cf}.
\end{theorem}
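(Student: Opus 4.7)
The plan is to apply the Fourier-pricing theorem for discontinuous payoffs (Theorem 2.7 in EGP) to the indicator function $f(x) = 1_{\{x > \log(B/S_0)\}}$, viewing $\supL_T$ as the driving random variable, in complete analogy with the proof of Theorem \ref{lb-value}. The direct computation of the dampened Fourier transform gives
\begin{align*}
\widehat{f}(u+iR) = \int_{\log(B/S_0)}^\infty \e^{(iu-R)x}\dx
                  = \frac{(B/S_0)^{iu-R}}{R-iu},
\end{align*}
which is valid for every $R > 0$. After the substitution $u\mapsto -u$ inside the integral this matches the shape of the damping factor $S_0^{R+iu}B^{-R-iu}/(R+iu)$ appearing in \eqref{ch3:ot}.

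Next, I would verify conditions (C1)--(C3) of EGP. For (C2), the exponential moment $E[\e^{R\supL_T}]$ is finite for every $R\in(-\infty,M)$ by Corollary~\ref{cor:sec5:neucor1}, in particular for the stated range $R\in(0,M)$. For (C1) and (C3), note that $\widehat{f}(u+iR)$ decays only like $1/|u|$ and therefore belongs to $L^1_{\text{bc}}(\R)$ but not to $L^1(\R)$; this is precisely why one obtains only the principal-value limit $\lim_{A\to\infty}$ in \eqref{ch3:ot} rather than absolute convergence. The extended characteristic function $\varphi_{\supL_T}(u-iR)$ that enters the formula is provided by Theorem~\ref{inv:wh}, which yields exactly the representation \eqref{supL-cf}.

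The delicate point, and what I expect to be the main obstacle, is the passage from the EGP framework for \emph{continuous} payoffs to the \emph{discontinuous} one: Theorem~2.7 in EGP requires that the law of the underlying random variable $\supL_T$ be \emph{atomless}, in order that the Fourier inversion formula hold pointwise at the discontinuity of $f$. This is exactly where the hypothesis of the theorem is used: Statement~\ref{cont-sup-Levy} shows that $P_{\supL_T}$ is atomless as soon as $L$ is of infinite variation or is of infinite activity and regular upwards, which is the assumption excluding compound Poisson processes and non-regular upwards finite variation processes. The exclusion of compound Poisson processes is in addition required in the Wiener--Hopf factorization of Theorem~\ref{thm:wh}.

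Finally, the second identity in \eqref{ch3:ot} is immediate from the risk-neutral interpretation: since $P$ is a martingale measure for $S$ by \eqref{asset drift}, and since the payoff is an indicator function, the time-zero price of the one-touch option is simply
\begin{align*}
\mathbb{DC}_T(\overline S;B) = E\big[1_{\{\overline S_T > B\}}\big]
  = P\big(\supL_T > \log(B/S_0)\big),
\end{align*}
so this equality serves as a sanity check rather than an additional computation. The proof for $\underline{L}_T$-based (put) one-touch options would proceed along the same lines using Remark~\ref{dual-Levy}.
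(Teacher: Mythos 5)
Your proposal is correct and follows essentially the same route as the paper: apply the discontinuous-payoff valuation theorem of EGP (Theorem 2.7) with the Fourier transform of the indicator, the moment bound from Corollary \ref{cor:sec5:neucor1}, the representation of $\varphi_{\supL_T}$ from Theorem \ref{inv:wh}, and — the key point you correctly isolate — the atomlessness of $P_{\supL_T}$ via Statement \ref{cont-sup-Levy} to justify the pointwise (principal-value) inversion. The only slip is cosmetic: the relevant hypotheses for Theorem 2.7 are labelled (D1)--(D2) in EGP, not (C1)--(C3) (which belong to the continuous-payoff Theorem 2.2), but the conditions you actually verify are the right ones.
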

\begin{proof}
We will apply Theorem 2.7 in EGP, hence we must check conditions (D1)--(D2).
As in the proof of Theorem \ref{lb-value}, Assumption \EM shows that condition
(D2) is satisfied for $R\in(-\infty,M)$, while Theorem \ref{inv:wh} provides
the characteristic function of $\supL_T$, given by \eqref{supL-cf}. Example 5.2
in EGP yields that the Fourier transform of the payoff function
$f(x)=1_{\{x>\log B\}}$ equals
\begin{align}
 \widehat{f}(iR-u) = \frac{B^{-R-iu}}{R+iu},
\end{align}
and condition (D1) is satisfied for $R\in(0,\infty)$. In addition, if the
measure $P_{\supL_T}$ is atomless, then the valuation function is continuous
and has bounded variation. Now, by Statement \ref{cont-sup-Levy}, we know
that the measure $P_{\supL_T}$ is atomless exactly when $L$ has infinite
variation, or has infinite activity and is regular upwards. Therefore,
Theorem 2.7 in EGP applies, and results in the valuation formula
\eqref{ch3:ot} for the one-touch call option.
\end{proof}

\begin{remark}
Completely analogous valuation formulas can be derived for the digital put
option with payoff $1_{\{\underline{S}_T<B\}}$.
\end{remark}

\begin{remark}
Summarizing the results of this paper and of EGP, when dealing with
\textit{continuous} payoff functions the valuation formulas can be applied
to \textit{all} \lev processes. When dealing with \textit{discontinuous}
payoff functions, then the valuation formulas apply to most \lev processes
\textit{apart} from \textit{compound Poisson} type processes without diffusion
component, and finite variation \lev processes which are not \textit{regular upwards}.
This is true for both non-path-dependent as well as for \textit{path-dependent
exotic} options.
\end{remark}

\begin{remark}
Arguing analogously to Theorems \ref{lb-value} and \ref{ot-value}, we  can derive
the price of options with a ``general'' payoff function $f(\overline{L}_T)$. For
example, one could consider payoffs of the form $[(\overline{S}_T-K)^+]^2$ or
$\overline{S}_T1_{\{\overline{S}_T>B\}}$; cf. \citeANP{Raible00}
\citeyear[Table 3.1]{Raible00} and 
Example 5.3 in EGP for the corresponding Fourier transforms.
\end{remark}

\subsection{Equity default swaps}
Equity default swaps were recently introduced in financial markets, and offer
a link between equity and credit risk. The structure of an equity default swap
imitates that of a credit default swap: the protection buyer pays a fixed
premium in exchange for an insurance payment in case of `default'. In this
case `default', also called the `equity event', is defined as the first time
the asset price process drops below a fixed barrier, typically 30\% or 50\% of
the initial value $S_0$.

Let us denote by $\tau_B$ the first passage time below the barrier level $B$,
i.e.
\begin{align*}
\tau_B = \inf\{t\geq0; S_t\leq B\}.
\end{align*}
The protection buyer pays a fixed premium denoted by $\mathcal K$ at
the dates $T_1, T_2,\dots, T_N=T$, provided that default has not
occurred, i.e. $T_i<\tau_B$. In case of default, the protection
seller makes the insurance payment $\mathcal C$, which is typically
50\% of the initial value. The premium $\mathcal K$ is fixed
such that the value of the equity default swap at inception is zero,
hence we get
\begin{align}\label{EDS}
\mathcal K = \frac{\mathcal C E\big[\e^{-r\tau_B}1_{\{\tau_B\leq T\}}\big]}
                  {\sum_{i=1}^N E\big[\e^{-rT_i}1_{\{\tau_B> T_i\}}\big]},
\end{align}
where $r$ denotes the risk-free interest rate.

Now, using that $1_{\{\tau_B\leq t\}}=1_{\{\underline S_t\leq B\}}$ which
immediately translates into
\begin{align}\label{ST}
P(\tau_B\leq t)
 = E\big[1_{\{\tau_B\leq t\}}\big]
 = E\big[1_{\{\underline S_t\leq B\}}\big],
\end{align}
and that
\begin{align*}
E\big[\e^{-r\tau_B}1_{\{\tau_B\leq T\}}\big]
 = \int_0^T\e^{-rt}P_{\tau_B}(\ud t),
\end{align*}
the quantities in \eqref{EDS} can be calculated using the valuation
formulas for one-touch options.

\bibliographystyle{chicago}
% \bibliography{H:/Papers/references}
% \bibliography{C:/WorkFiles/Papers/references}
\bibliography{references}

\end{document}